\newtheorem{proposition}{Proposition}
\newcommand{\expect}{\textsf{E}}
\newcommand{\transpose}{\textsf{T}}
\newcommand{\hermitian}{\textsf{H}}
\newcommand{\variance}{\textsf{Var}}
\newcommand{\covariance}{\textsf{Cov}}
\newcommand{\g}{\mathbf{g}}
\newcommand{\h}{\mathbf{h}}
\newcommand{\bphi}{\boldsymbol{\phi}}
\newcommand{\bpsi}{\boldsymbol{\psi}}
\newcommand{\Tup}{\tau_{\text{up}}}
\newcommand{\Tdp}{\tau_{\text{dp}}}
\newcommand{\Tdd}{\tau_{\text{dd}}}
\newcommand{\Ed}{\mathcal{E}_{\text{d}}}
\newcommand{\Eup}{\mathcal{E}_{\text{up}}}
\newcommand{\Edp}{\mathcal{E}_{\text{dp}}}
\newcommand{\CN}{\mathcal{CN}}
\newcommand{\complex}{\mathbb{C}}
\newcommand{\Tframe}{\tau_{\textrm{frame}}}
\newcommand{\DT}{\text{DT}}
\newcommand{\CSI}{\text{sCSI}}
\newcommand{\SE}{\text{SE}}
\newcommand{\Vmax}{V_{\textrm{max}}}
\newcommand{\I}{\mathbf{I}}
\newcommand{\UE}{\text{UE}}
\newcommand{\z}{\mathbf{z}}
\begin{document}

\title{Impact of Mobility on Downlink Cell-Free Massive MIMO Systems}

\author{Abhinav~Anand,
        Chandra~R.~Murthy,~\IEEEmembership{Senior Member,~IEEE,}
        and~Ribhu~Chopra,~\IEEEmembership{Member,~IEEE}
\thanks{A. Anand and C. R. Murthy are with the Department of ECE, Indian Institute of Science, Bangalore 560012,
India (e-mails: \{abhinavanand, cmurthy\}@iisc.ac.in). R. Chopra is with the Department of EEE, Indian Institute of Technology, Guwahati 781039, India (email: ribhu@outlook.com).}
}

\maketitle
\vspace{-1cm}
\begin{abstract}

In this paper, we analyze the achievable downlink spectral efficiency of cell-free massive multiple input multiple output (CF-mMIMO) systems, accounting for the effects of channel aging (caused by user mobility) and pilot contamination. We consider two cases, one where user equipments (UEs) rely on downlink pilots beamformed by the access points (APs) to estimate downlink channel, and another where UEs utilize statistical channel state information (CSI) for data decoding. For comparison, we also consider cellular mMIMO and derive its achievable spectral efficiency with channel aging and pilot contamination in the above two cases. Our results show that, in CF-mMIMO, downlink training is preferable over statistical CSI when the length of the data sequence is chosen optimally to maximize the spectral efficiency. In cellular mMIMO, however, either one of the two schemes may be better depending on whether user fairness or sum spectral efficiency is prioritized. Furthermore, the CF-mMIMO system generally outperforms cellular mMIMO even after accounting for the effects of channel aging and pilot contamination. Through numerical results, we illustrate the effect of various system parameters such as the maximum user velocity, uplink/downlink pilot lengths, data duration, network densification, and provide interesting insights into the key differences between cell-free and cellular mMIMO systems.

\end{abstract}

\begin{IEEEkeywords}
Cell-free massive MIMO, cellular massive MIMO, user mobility, channel aging, pilot contamination, channel hardening
\end{IEEEkeywords}

%
\IEEEpeerreviewmaketitle

\section{Introduction}\label{sec:introduction}

\IEEEPARstart{C}{ell}-free massive multiple-input multiple-output (mMIMO) has received considerable attention in recent years \cite{zhang_access_2019, interdonato_eurasip_2019, zhang_jsac_2020}. Originally, the cellular mMIMO architecture, where a large number of antennas colocated on a base station (BS) serve an exclusive set of user equipments (UEs), was shown to bring substantial gains in spectral efficiency (SE) over older generation technologies \cite{marzetta_book, bjrnson_book_2017}. However, in a cellular network, only the UEs that are near the BS, i.e., in the cell center, enjoy high data rates while the UEs at the cell edge experience high inter-cell interference and low throughput. The primary goal of next-generation wireless networks must not be to improve the peak data rate but the rate that can be delivered at a vast majority of UE locations in a given region \cite{demir_textbook_2021}. Cell-free mMIMO has been proposed as a potential solution for providing uniformly high data rates in a wide network.

In a cell-free mMIMO network, a large number of geographically distributed access points (APs) coherently serve multiple UEs on the same time-frequency resource~\cite{ngo_twc_2017}. The APs are connected to one or more central processing units (CPUs) via fronthaul links~\cite{demir_textbook_2021}. The CPU orchestrates the AP operations and jointly processes the signals to/from the UEs. Owing to the distributed implementation, the signal co-processing at multiple APs and the massive number of AP antennas, cell-free mMIMO offers higher coverage probability to UEs than cellular mMIMO~\cite{demir_textbook_2021}. Studies demonstrating the performance advantages of cell-free mMIMO over conventional cellular mMIMO and small-cell networks can be found in~\cite{ngo_twc_2017, nayebi_twc_2017, bjornson_twc_2020, nayebi_asilomar_2015}. The performance gains, however, come at the cost of increased fronthaul requirements which translate to increased energy and power consumption~\cite{ngo_gcom_2018, nguyen_letters_2017, zheng_twc_2021, bashar_tgcn_2019}.

To realize the enormous spatial multiplexing gain offered by mMIMO, each AP needs to estimate the channel from the associated UEs on the uplink. The channel estimates acquired at the AP are used to perform receive combining (or transmit precoding) on the uplink (or downlink) data symbols transmitted subsequently \cite{demir_textbook_2021}. Hence, the quality of the channel estimates strongly affects the performance of the network. Most previous studies on cell-free mMIMO assume that the channel between the AP and the UE is quasi-static or block-fading. Such a model is valid when the UEs in the network are stationary or move slowly. As such, previous results on the achievable SE of cell-free mMIMO may not hold true in extreme mobility scenarios. It is important, therefore, that we investigate how mobility impacts the performance of a cell-free mMIMO network.

User mobility brings two major problems to an mMIMO system. First, the temporal variations in the channel response resulting from user movement cause a disparity between the channel state information (CSI) acquired at the AP and the channel experienced by the data symbols. This is known as \textit{channel aging}~\cite{truong_JCN_2013} and it results in a drop in the achievable network SE. Second, due to the fast-varying nature of the channel, the coherence interval may not be long enough to accommodate pairwise orthogonal pilot sequences for all the UEs present in the network. As a result, a fraction of the UEs end up sharing the same pilot sequence and contaminate each other's channel estimate. This phenomenon, known as \textit{pilot contamination}~\cite{bjrnson_book_2017, demir_textbook_2021} degrades the quality of the available channel estimates and causes a drop in the achievable SE.

In cellular mMIMO, when a signal is transmitted from a large number of BS antennas, the effective downlink channel after transmit precoding tends to converge to its mean value. This is known as \textit{channel hardening} \cite{marzetta_book}. An important consequence of this phenomenon is that UEs in cellular mMIMO do not need to estimate the downlink channel, and they can rely instead on knowledge of the channel statistics to decode the data symbols. This eliminates the need for downlink pilots and makes cellular mMIMO scalable with respect to the number of BS antennas. In cell-free mMIMO, however, the transmitting AP antennas are distributed over a wide region and a UE experiences large path loss from the far-away APs. As a result, the channel hardening phenomenon in cell-free mMIMO is much less pronounced than in cellular mMIMO. In this regard, the authors in \cite{interdonato_twc_2019} showed that downlink training using beamformed pilots can significantly improve the  downlink SE of cell-free mMIMO, outweighing the additional training overhead cost. Moreover, it was shown in \cite{polegre_icc_2020, chen_tcom_2018} that one should not rely on channel hardening when analyzing the performance of or designing receiver algorithms for cell-free mMIMO networks. These studies, however, do not account for the time-varying nature of the channel arising from user mobility. In the context of single-cell orthogonal frequency division multiplexing (OFDM)-based mMIMO communications, in \cite{Abhinav_TCOM_2022}, we analyzed the effect of pilot contamination and channel aging on the SE, and developed pilot and data subcarrier allocation schemes to improve the SE.

The authors in \cite{zheng_twc_2021} studied the uplink and the downlink achievable SE of cell-free mMIMO in the presence of channel aging and pilot contamination assuming large-scale fading decoding (LSFD) and matched filtering (MF) receivers. The performance analysis of zero-forcing (ZF) precoding in mobility-impaired downlink cell-free mMIMO was taken up in \cite{jiang_letters_2021}. In \cite{chopra_letters_2021}, a model involving varying rates of channel evolution across APs was proposed and the performance of uplink cell-free mMIMO was analyzed. However, none of the above works consider the use of downlink training in cell-free mMIMO. Thus, the question of whether the use of beamformed pilots can improve the downlink SE in the face of user mobility and time-varying channels remains open in the literature, and is the focus of this work. Furthermore, even in the context of cellular mMIMO, it is unclear whether the channel hardening effect is sufficient to extract the benefits of mMIMO under user mobility. We study this aspect also, in this paper.

\subsection{Our Contributions}

In this paper, we derive analytical expressions for the approximate achievable downlink SE of both cell-free and cellular mMIMO systems impaired by channel aging and pilot contamination. We incorporate time-varying channel and non-orthogonal uplink/downlink pilots in the analysis. We investigate the effect of factors such as the maximum user velocity, the relative uplink/downlink training lengths and the downlink data duration on the downlink SE of cell-free mMIMO. We find that while the downlink training scheme outperforms the statistical CSI scheme, the relative gain in performance reduces at higher user mobility. Moreover, as user mobility increases, the channel varies more rapidly, and it is necessary to shorten the data duration and re-estimate the channel more frequently. Nonetheless, despite the additional training overhead, downlink training outperforms statistical CSI when the data duration is chosen optimally. We also look at the effect of densifying a cell-free network with APs when the total number of AP antennas in the network is held fixed. We find that the performance gain due to densification in the presence of downlink training is much more significant compared to when UEs rely on statistical CSI to decode downlink data. From a sum-SE perspective, however, this gain in performance diminishes as the UEs move faster. Finally, focusing on cellular mMIMO, we show that the gain in the average 90\% likely downlink SE due to downlink training is marginally negative owing to the higher degree of channel hardening. However, there is still substantial gain in the sum-SE with downlink training. As such, either of the two schemes may be considered in cellular mMIMO depending on which performance measure is important.

The rest of the paper is organized as follows. Section \ref{sec:system_model} discusses the system model and describes the frame structure. Section \ref{sec:performance_cell_free} presents the analytical results on the approximate downlink SE and with downlink training and statistical CSI-based decoding. Section \ref{sec:performance_cellular} presents the downlink SE analysis for a cellular mMIMO network. Section \ref{sec:numerical_results} presents numerical results to elucidate the performances of cell-free and cellular mMIMO under user mobility. Finally, section \ref{sec:conclusion} concludes the paper and provides suggestions for future work.

\textit{Notation:} Matrices and column vectors are denoted by boldface uppercase and lowercase letters. The notations $(\cdot)^{*}$, $(\cdot)^{\transpose}$ and $(\cdot)^{\hermitian}$ represent the conjugate, the transpose and the conjugate transpose operation. The symbols $\mathbf{0}_{L}$ and $\mathbf{I}_{L}$ denote the null matrix and the identity matrix of order $L$. The notation $\CN\left(\mathbf{0}_{L}, \mathbf{R}\right)$ refers to an $L$-dimensional circularly symmetric complex normal distribution with mean vector $\mathbf{0}_{L}$ and covariance matrix $\mathbf{R}$. The notations $\expect\{\cdot\}$, $\covariance\{\cdot\}$ and $\variance\{\cdot\}$ represent the expectation, the covariance and the variance operations.

\section{System Model}\label{sec:system_model}
We consider a time-division duplex (TDD) cell-free mMIMO network in which $M$ APs (indexed as $m = 1,2, \dots, M$) each equipped with $L$ antennas (indexed as $l = 1,2, \dots, L$) coherently serve $K$ single-antenna mobile UEs (indexed as $k = 1, 2, \dots, K)$ on the same time-frequency resource. As is common in the mMIMO literature, we assume $ML \gg K$. The APs are geographically distributed over a wide region and are connected to a CPU via ideal fronthaul links. We denote the velocity of the $k$th UE by $v_{k}$, and assume $0 \leq v_k \leq V_{\text{max}}$ where $V_{\text{max}}$ denotes the maximum possible UE velocity in the system~\cite{zhang_tcom_2017}. Furthermore, the UEs move independently of each other. Figure \ref{fig:network_figure} gives an illustration of the network.

\begin{figure}
    \centering
    \includegraphics[width=3.0in]{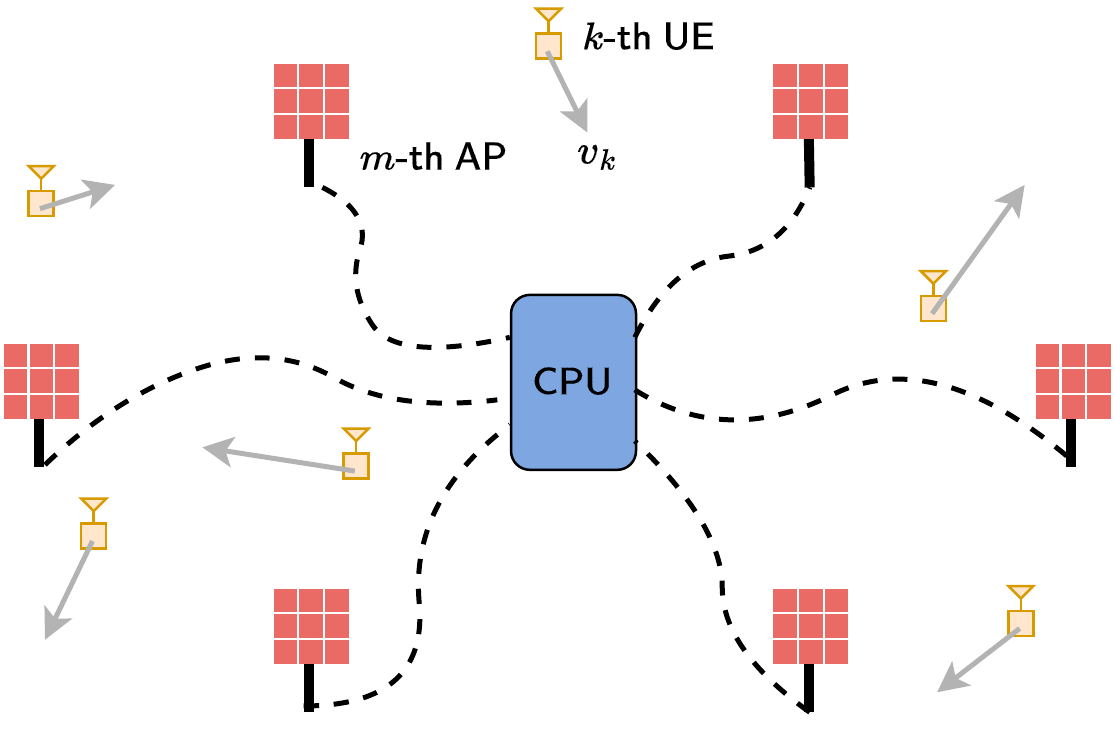}
    \caption{A cell-free mMIMO network with mobile UEs.}
    \label{fig:network_figure}
\end{figure}

User mobility causes the channel coefficients between the UEs and the APs to vary with time. For the purpose of this work, we consider a transmit frame comprising $\Tframe$ contiguous symbols (indexed as $n = 0, 1, \dots, \Tframe - 1$) each of which may be used for either pilot or data transmission. The channel coefficients between an AP and a UE
are assumed to remain constant within one symbol; they may however vary from symbol to symbol. 
The time duration of each symbol is $T_{\textrm{s}} = \frac{1}{B}$, where $B$ denotes the system bandwidth.

Let $\g_{mk}[n] = \sqrt{\beta_{mk}}\h_{mk}[n] \in \mathbb{C}^L$ denote the complex-valued channel vector between the $m$th AP and the $k$th UE during the $n$th symbol. Here, $\h_{mk}[n] \sim \CN\left(\mathbf{0}_{L}, \mathbf{I}_{L}\right)$ denotes the small-scale independent and identically distributed (i.i.d.) Rayleigh fading component between the $m$th AP and the $k$th UE. The quantity $\beta_{mk}$ is the large-scale fading coefficient that models the path-loss and the shadowing effects. We assume that $\beta_{mk}$ is constant across all antennas of the $m$th AP, across all symbols in a transmit frame, and is known at the APs and the CPU.

The temporal variation of the propagation channel between the $m$th AP and the $k$th UE is modeled as follows: starting from the channel at the zeroth instant $\g_{mk}[0]$, the channel at a later instant $n$ ($n \geq 0$) is expressed as \cite{chopra_twc_2018, zheng_twc_2021}:
\begin{align}\label{eq:ChannelAging}
    \g_{mk}[n] = \rho_{k}[n] \g_{mk}[0] + \bar{\rho}_{k}[n] \mathbf{z}_{mk}[n],
\end{align}
where $\rho_{k}[n] = J_{0}\left(\frac{2\pi v_{k} f_{\text{c}} n T_{\text{s}}}{c}\right)$ denotes the Jakes' autocorrelation between $\g_{mk}[0]$ and $\g_{mk}[n]$ with $J_{0}(.)$ representing the zeroth order Bessel function of the first kind, $f_{\text{c}}$ the carrier frequency, $c$ the speed of light, and $T_{\text{s}}$ the symbol duration, and $\bar{\rho}_{k}[n] = \sqrt{1 - \rho_{k}^{2}[n]}$. The quantity $\mathbf{z}_{mk}[n]$ in \eqref{eq:ChannelAging} represents the innovation component due to channel aging which is independent of and identically distributed as $\g_{mk}[0]$, i.e., as $\CN\left(\mathbf{0}_{L}, \beta_{mk}\mathbf{I}_{L}\right)$. 

The transmit frame comprises three successive phases: uplink training, downlink training and downlink data transmission (see Figure \ref{fig:frame_structure}.) Since the overall training duration is typically small, we can assume that the channel coefficients remain approximately constant during the training intervals\footnote{For example, in a system with bandwidth $B=1$ MHz at a center frequency of $2$ GHz and with the users moving at a maximum velocity of $100$ km/hr, if the total training duration spans $20$ symbols, the Jakes' autocorrelation between the start and end of the training duration is about $0.99986$.}~\cite{papa_tvt_2017, papa_twc_2015, kong_tcom_2015}. 

\begin{figure}
    \centering
    \includegraphics{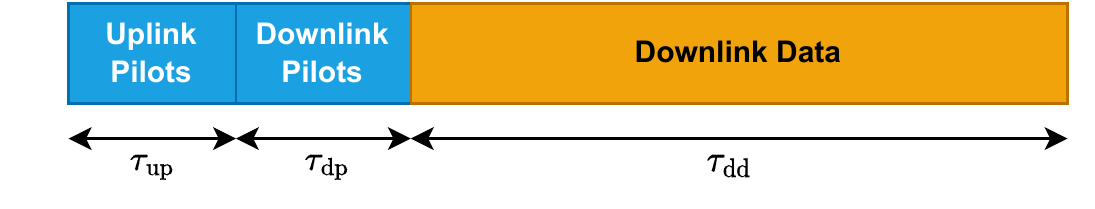}
    \caption{The transmit frame structure considered in this work.}
    \label{fig:frame_structure}
\end{figure}

\subsection{Uplink Training}
In the first phase, the UEs transmit pilot sequences using which the APs estimate the uplink channel $\g_{mk} = \g_{mk}[0]$. We let $\Tup < K$ denote the number of mutually orthogonal $\Tup$-length pilot sequences available for transmission. The $k$th UE transmits the pilot sequence $\sqrt{\Tup}\bphi_{k}^{\hermitian}$ where $||\bphi_{k}||^{2} = 1$. Since $\Tup < K$, two or more UEs may transmit same pilot sequence. Therefore, for two UEs $k$ and $k'$, $\bphi_{k}^{\hermitian} \bphi_{k'}$ equals $1$ when UEs $k$ and $k'$ transmit the same pilot sequence, and equals $0$ otherwise. The training signal received at the $m$th AP is an $L\times\Tup$ matrix expressed as
\begin{equation}\label{eq:ULReceivePilot}
    \mathbf{Y}_{\text{up}, m} = \sqrt{\Tup\Eup} \sum_{k'=1}^{K} \g_{mk'} \bphi_{k'}^{\hermitian} + \mathbf{W}_{\text{up}, m} \in \complex^{L\times\Tup}.
\end{equation}
In the above equation, $\Eup$ denotes the normalized transmit SNR for the uplink and $\mathbf{W}_{\text{up}, m} \in \complex^{L\times\Tup}$ denotes noise at the $m$th AP whose elements are i.i.d $\CN\left(0, 1\right)$. To estimate the uplink channel from UE $k$, the $m$th AP correlates the received pilot signal with pilot $\bphi_{k}$ as
\begin{align}
    \breve{\mathbf{y}}_{\text{up}, mk} =& \mathbf{Y}_{\text{up}, m}  \bphi_{k}  = \sqrt{\Tup\Eup} \g_{mk} + \sqrt{\Tup\Eup} \sum_{k' \neq k}^{K} \g_{mk'} \bphi_{k'}^{\hermitian} \bphi_{k} + \mathbf{w}_{\text{up}, mk},  \label{eq:ULPilotProcessed}
\end{align}
where $\mathbf{w}_{\text{up}, mk} = \mathbf{W}_{\text{up}, m} \bphi_{k}$ has i.i.d. $\CN\left(0, 1\right)$ entries. Now, using  $\breve{\mathbf{y}}_{\text{up}, mk}$, the $m$th AP obtains an MMSE estimate $\hat{\g}_{mk}$ of $\g_{mk}$, as $\hat{\g}_{mk} = c_{mk} \breve{\mathbf{y}}_{\text{up}, mk}$, where
\begin{equation}\label{eq:gammaConst}
    c_{mk} \triangleq \frac{\sqrt{\Tup\Eup} \beta_{mk}}{\Tup\Eup\sum_{k' = 1}^{K} \beta_{mk'} |\bphi_{k'}^{\hermitian}\bphi_{k}|^{2} + 1}.
\end{equation}
It is known that, with MMSE estimation,  $\hat{\g}_{mk} \sim \CN(0, \gamma_{mk}\mathbf{I}_{L})$, with $\gamma_{mk} = \sqrt{\Tup\Eup}c_{mk}\beta_{mk}$.
Also, the channel estimation error  $\Tilde{\g}_{mk} = \g_{mk} - \hat{\g}_{mk}$ is independent of  $\hat{\g}_{mk}$ and is distributed as $\CN\left(0, \left(\beta_{mk} - \gamma_{mk}\right)\mathbf{I}_{L}\right)$.

\subsection{Downlink Training}
Having obtained the channel estimate on the uplink, the APs precode and transmit pilot sequences to UEs in the downlink direction. We consider maximum-ratio (MR) precoding at the APs, also known as conjugate beamforming or matched filtering (MF) precoding, as it allows the APs to perform channel estimation and precoding locally without sharing their CSI with the CPU \cite{ngo_twc_2017, interdonato_twc_2019}. We note that the subsequent analysis can be extended to other linear precoding schemes as well, with some more bookkeeping. Similar to the uplink case, we assume that there are $\Tdp < K$ mutually orthogonal downlink pilot sequences in total. Let $\sqrt{\Tdp}\bpsi_{k}^{\hermitian} \in \complex^{1 \times\Tdp}$ denote the pilot sequence transmitted for the $k$th UE, with $\|\bpsi_{k}\|^{2} = 1$.

With MR precoding, the $m$th AP transmits $\Tdp$ training symbols over its $L$ antennas represented by the matrix $\mathbf{X}_{\text{dp}, m} = \sqrt{\Tdp\Edp} \sum_{k=1}^{K} \sqrt{\eta_{mk}} \hat{\g}_{mk}^{*}\bpsi_{k}^{\hermitian} \in \mathbb{C}^{L \times \Tdp}$. Here, $\eta_{mk} \in [0,1]$ denotes the power control coefficient used by the $m$th AP for its transmissions to UE $k$, and $\Edp$ denotes the normalized downlink transmit SNR. The total power spent by the $l$th antenna of the $m$th AP on downlink pilots is 
\begin{align}\label{eq:DLTotalPilotPower}
    \expect\Big\{\big|\big|\mathbf{X}_{\text{dp}, m_{l}}\big|\big|^{2}\Big\} =&  \Tdp\Edp \sum_{k=1}^{K} \eta_{mk} \gamma_{mk}  + \Tdp\Edp \sum_{k=1}^{K}\sum_{k'\neq k}^{K} \sqrt{\eta_{mk}\eta_{mk'}}\bpsi_{k'}^{\hermitian}\bpsi_{k} \expect\{\hat{g}_{m_{l}k} \hat{g}_{m_{l}k'}^{*}\}.
\end{align}
The $\expect\{\hat{g}_{m_{l}k} \hat{g}_{m_{l}k'}^{*}\}$ term in \eqref{eq:DLTotalPilotPower}  will be non-zero if UEs $k$ and $k'$ transmit the same uplink pilot sequence. As shown in \cite{interdonato_twc_2019}, it can be eliminated by assigning orthogonal pilots to UEs that share the same pilot on the uplink, i.e., if $\bphi_{k'} = \bphi_{k}$, then  $\bpsi_{k'}^{\hermitian} \bpsi_{k} = 0$. Such an assignment is feasible as long as $\Tup\Tdp \geq K$. Then, the total power spent by the $m$th AP on downlink pilots equals $\Tdp\Edp\sum_{k=1}^{K}L\eta_{mk}\gamma_{mk}$. Since the total available downlink training power is $\Tdp\Edp$, we get the following constraint on $\eta_{mk}$:
\begin{align}\label{eq:DLPilotPowerConstraint}
    \sum_{k=1}^{K} \eta_{mk}\gamma_{mk} \leq \frac{1}{L}.
\end{align}

Now, the downlink pilot signal received by the $k$th UE is
\begin{align}
    \mathbf{y}_{\text{dp}, k} = \sum_{m=1}^{M} \g_{mk}^{\transpose} \mathbf{X}_{\text{dp}, m} + \mathbf{w}_{\text{dp}, k}
    = \sqrt{\Tdp\Edp} \sum_{m=1}^{M}\sum_{k'=1}^{K} \sqrt{\eta_{mk}} \g_{mk}^{\transpose} \hat{\g}_{mk'}^{*} \bpsi_{k'}^{\hermitian} + \mathbf{w}_{\text{dp}, k}.\label{eq:DLPilotReceive}
\end{align}
where $\mathbf{w}_{\text{dp}, k}$ containing i.i.d $\CN\left(0,1\right)$ entries represents noise at the $k$th UE. To estimate the downlink channel, the $k$th UE correlates the recieved signal with $\bpsi_{k}$ to obtain
\begin{equation}\label{eq:DLPilotFinal}
    \breve{y}_{\text{dp}, k} = \sqrt{\Tdp\Edp}d_{kk} + \sqrt{\Tdp\Edp}\sum_{k' \neq k}^{K} d_{kk'}\bpsi_{k'}^{\hermitian}\bpsi_{k} + \Tilde{w}_{\text{dp}, k}.
\end{equation}
where $\Tilde{w}_{\text{dp}, k} = \mathbf{w}_{\text{dp}, k}\bpsi_{k}$, and
$d_{kk'} \triangleq \sum_{m=1}^{M} \sqrt{\eta_{mk'}} \g_{mk}^{\transpose}\hat{\g}_{mk'}^{*}$ represents the effective downlink channel experienced by the $k$th user for the data stream intended to the $k'$th user. Note that $d_{kk}$ is the desired downlink channel coefficient. The term  containing $\bpsi_{k'}^{\hermitian}\bpsi_{k}$ represents the interference due to downlink pilot contamination. Using $\breve{y}_{\text{dp}, k}$, the $k$th UE computes an MMSE estimate $\hat{d}_{kk}$ of $d_{kk}$ as~\cite{interdonato_tcom_2021}
\begin{align}\label{eq:DLChannelEstimate}
    \hat{d}_{kk} = \sum_{m=1}^{M}L\sqrt{\eta_{mk}} \gamma_{mk}  + &  \frac{\sqrt{\Tdp\Edp} \sum_{m=1}^{M} L\eta_{mk}\gamma_{mk}\beta_{mk}}{1 + \Tdp\Edp \sum_{m=1}^{M} \sum_{k'=1}^{K} L \eta_{mk'} \gamma_{mk'} \beta_{mk} |\bpsi_{k'}^{\hermitian}\bpsi_{k}|^{2}} \nonumber \\
    \times& \left(\breve{y}_{\text{dp}, k} - \sqrt{\Tdp\Edp} \sum_{m=1}^{M} L\sqrt{\eta_{mk}} \gamma_{mk}\right).
\end{align}
Due to MMSE estimation, we can write  $d_{kk} = \hat{d}_{kk} + \Tilde{d}_{kk}$ where $\Tilde{d}_{kk}$ is the zero-mean downlink channel estimation error; note that $\hat{d}_{kk}$ and $\Tilde{d}_{kk}$ are  uncorrelated.

\subsection{Downlink Data Transmission}
After gaining knowledge of the effective downlink channel, the UEs proceed to detect the incoming data symbols. The data transmission phase is assumed to be $\Tdd$ symbols long; thus $\Tframe = \Tup + \Tdp + \Tdd$. During the $n$th signaling interval ($n = \Tup + \Tdp, \dots, \Tframe-1$), the $m$th AP applies MR precoding on the data symbols and transmits an $L$-dimensional vector given by 
\begin{align}
    \boldsymbol{x}_{m}[n] = \sqrt{\Ed} \sum_{k=1}^{K} \sqrt{\eta_{mk}}\hat{\g}_{mk}^{*} q_{k}[n],
\end{align}
where $\Ed$ denotes the normalized transmit SNR for downlink data and $q_{k}[n]$ denotes the $n$th data symbol transmitted for the $k$th UE. The symbols $\{q_{k}[n]\}$ are assumed to be uncorrelated across all the UEs. Further, they are assumed to have zero mean and unit variance, i.e., $\expect\{|q_{k}[n]|^{2}\} = 1$. Similar to the downlink training case, the total power spent by the $m$th AP during the $n$th transmission is $\expect\big\{\big|\big|\boldsymbol{x}_{m}[n]\big|\big|^{2}\big\} = \Ed\sum_{m=1}^{M}L\eta_{mk}\gamma_{mk}$ whose maximum value is $\Ed$. Therefore, the data power constraint at the $m$th AP is $\sum_{k=1}^{K}\eta_{mk}\gamma_{mk} \leq \frac{1}{L}$, the same as the pilot power constraint given in \eqref{eq:DLPilotPowerConstraint}. Now, the $n$th interval data signal received at the $k$th UE is given by
\begin{align}
    r_{\text{d}, k}[n] =& \sum_{m=1}^{M} \g_{mk}^{\transpose}[n] \boldsymbol{x}_{m}[n] + w_{\text{d},k}[n] \nonumber \\
    =& \sqrt{\Ed} \sum_{m=1}^{M} \sqrt{\eta_{mk}} \g_{mk}^{\transpose}[n] \hat{\g}_{mk}^{*}q_{k}[n]  + \sqrt{\Ed} \sum_{m=1}^{M} \sum_{k' \neq k}^{K} \sqrt{\eta_{mk'}}\g_{mk}^{\transpose}[n]\hat{\g}_{mk'}^{*}q_{k'}[n] + w_{\text{d},k}[n], \label{eq:DLDataReceive1} 
\end{align}
where $w_{\text{d},k}[n] \sim \CN\left(0,1\right)$ denotes noise at the $k$th UE. The above can be re-written as
\begin{align}\label{eq:DLDataReceive2}
    r_{\text{d}, k}[n] =  \sqrt{\Ed}d_{kk}[n] q_{k}[n] + \sqrt{\Ed} \sum_{k' \neq k}^{K} d_{kk'}[n] q_{k'}[n] + w_{\text{d},k}[n]
\end{align}
where 
\begin{align}\label{eq:NthDLChannelGain}
    d_{kk'}[n] \triangleq \sum_{m=1}^{M} \sqrt{\eta_{mk'}} \g_{mk}^{\transpose}[n]\hat{\g}_{mk'}^{*}
\end{align}
denotes the effective downlink channel coefficient at the $n$th-instant. We note that substituting $n = 0$ in \eqref{eq:NthDLChannelGain} gives $d_{kk'}[0] = \sum_{m=1}^{M} \sqrt{\eta_{mk'}} \g_{mk}^{\transpose}[0]\hat{\g}_{mk'}^{*}$. Since $\g_{mk}[0] = \g_{mk}$, we have $d_{kk'}[0] = d_{kk'}$ which is the downlink channel coefficient during downlink training. Using the channel-aging model in \eqref{eq:ChannelAging}, the quantity $d_{kk'}[n]$ can be expressed as
\begin{align}
    d_{kk'}[n] =& \rho_{k}[n]\sum_{m=1}^{M} \sqrt{\eta_{mk'}} \g_{mk}^{\transpose} \hat{\g}_{mk'}^{*} + \bar{\rho}_{k}[n]\sum_{m=1}^{M} \sqrt{\eta_{mk'}} \mathbf{z}_{mk}^{\transpose}[n]\hat{\g}_{mk'}^{*} \nonumber \\
    =& \rho_{k}[n]d_{kk'} + \bar{\rho}_{k}[n]z_{kk'}[n] \nonumber \\
    =& \rho_{k}[n]\hat{d}_{kk'} + \rho_{k}[n]\Tilde{d}_{kk'} + \bar{\rho}_{k}[n]z_{kk'}[n], \label{eq:d_kk'[n]}
\end{align}
where $z_{kk'}[n] = \sum_{m=1}^{M}\sqrt{\eta_{mk'}}\mathbf{z}_{mk}^{\transpose}[n]\hat{\g}_{mk'}^{*}$ represents the $n$th-instant innovation component in the downlink channel due to channel aging. We note that the quantities $\hat{d}_{kk'}$, $\Tilde{d}_{kk'}$ and $z_{kk'}[n]$ in \eqref{eq:d_kk'[n]} are mutually uncorrelated. However, the presence of the innovation component in the downlink channel entails significant bookkeeping in deriving the downlink SE expressions, which is the focus of this work.

The downlink channel coefficient defined in \eqref{eq:NthDLChannelGain} is a non-Gaussian quantity. However, when the total number of AP antennas $ML$ is sufficiently large, it approximates a complex Gaussian random variable with mean, variance and pseudo-variance \cite{park_book} given by
\begin{align}
    \mu_{kk'}[n] =& \rho_{k}[n]\sum_{m=1}^{M}L\sqrt{\eta_{mk'}}\gamma_{mk'}\frac{\beta_{mk}}{\beta_{mk'}}\bphi_{k'}^{\hermitian}\bphi_{k}\\
    \varsigma_{kk'} =& \sum_{m=1}^{M}L\eta_{mk'}\gamma_{mk'}\beta_{mk}\\
    \varrho_{kk'}[n] =& \rho_{k}^{2}[n]\sum_{m=1}^{M}L\eta_{mk'}\gamma_{mk'}^{2}\left(\frac{\beta_{mk}}{\beta_{mk'}}\right)^{2}\left(\bphi_{k'}^{\hermitian}\bphi_{k}\right)^{2},
\end{align}
where the notations $\mu_{kk'}[n]$, $\varsigma_{kk'}$ and $\varrho_{kk'}[n]$ denote the mean, variance and the pseudo-variance of $d_{kk'}[n]$, respectively. Figure \ref{fig:KLdivergence} shows the Kullback-Leibler (KL) distance \cite{cover_book} between the simulated and the Gaussian probability density functions (PDFs) of the real and the imaginary parts of $d_{kk'}[0]$ and $d_{kk}[0]$ when $M = 100$ APs and $K = 40$ UEs are deployed in a cell-free network. The propagation model is adopted from \cite{bjornson_twc_2020} and the simulation settings are provided in section \ref{sec:numerical_results}. We see that the KL distance between the two PDFs reduces as the number of antennas on an AP grows. Therefore, for finite values of $M$ and $L$, the downlink channel gain can be treated as approximately complex Gaussian.\footnote{Note, however, that unlike the uplink channel coefficient, the downlink channel gain is not circularly symmetric owing to nonzero mean and pseudo-variance.} As a consequence, $\hat{d}_{kk'}$, $\Tilde{d}_{kk'}$ and $z_{kk'}[n]$ in \eqref{eq:d_kk'[n]} become jointly Gaussian.

\begin{figure}
    \centering
    \includegraphics[width=4.5in]{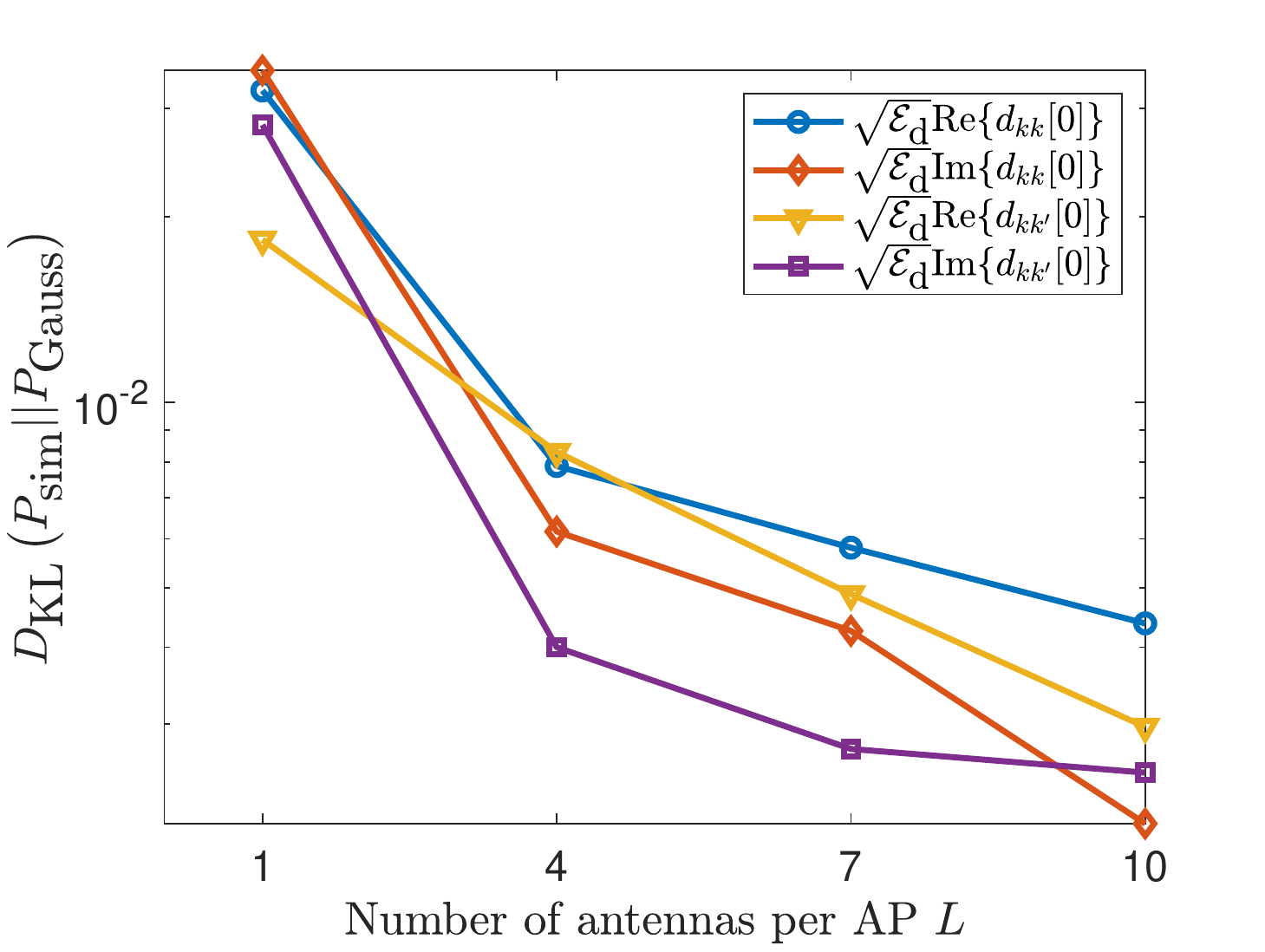}
    \caption{Plot of the KL distance between the simulated ($P_{\textrm{sim}}$) and the Gaussian ($P_{\textrm{Gauss}}$) distributions of the downlink channel coefficient as a function of the number of antennas on an AP.}
    \label{fig:KLdivergence}
\end{figure}

\section{Performance of Cell-Free mMIMO}\label{sec:performance_cell_free}

In this section, we derive a closed-form expression for the approximate achievable downlink SE of the cell-free massive MIMO network considering channel aging and uplink/downlink pilot contamination effects. We compare it against the scenario when UEs rely on statistical CSI to recover the transmitted data symbols. 

\subsection{Performance With Downlink Training}
\begin{proposition}
The $n$th-instant approximate achievable downlink SE of the $k$th UE in the cell-free mMIMO network described above takes the form
\begin{align}\label{eq:CF_SE_DT}
    \SE_{k}^{\text{CF},\DT}[n] = \log_{2}\left(1 + \text{SINR}_{k}^{\text{CF},\DT}[n]\right)
\end{align}
where $\text{SINR}_{k}^{\text{CF},\DT}[n]$ denotes the $n$th-instant effective downlink SINR of the $k$th UE with downlink training, given by
\begin{multline}\label{eq:CF_SINR_DT}
    \text{SINR}_{k}^{\text{CF},\DT}[n] \\= \frac{\rho_{k}^{2}[n]\Ed\left(\sum_{m=1}^{M}L\sqrt{\eta_{mk}}\gamma_{mk}\right)^{2} + \rho_{k}^{2}[n]\Ed\kappa_{k}}{\Ed\sum_{k'\neq k}\left(\varsigma_{kk'} + \rho_{k}^{2}[n]\left(\sum_{m=1}^{M}L\sqrt{\eta_{mk'}}\gamma_{mk'}\frac{\beta_{mk}}{\beta_{mk'}}\right)^{2}|\bphi_{k'}^{\hermitian} \bphi_{k}|^{2}\right) + \Ed\left(\varsigma_{kk} - \rho_{k}^{2}[n]\kappa_{k}\right) + 1},
\end{multline}
in which $\varsigma_{kk'} = \sum_{m=1}^{M}L\eta_{mk'}\gamma_{mk'}\beta_{mk}$ denotes the variance of the effective downlink channel $d_{kk'}[n]~\forall~n$ and $\kappa_{k} = \frac{\Tdp\Edp\left(\sum_{m=1}^{M}L\eta_{mk}\gamma_{mk}\beta_{mk}\right)^{2}}{1 +  \Tdp\Edp \sum_{m=1}^{M} \sum_{k'=1}^{K}L \eta_{mk'} \gamma_{mk'} \beta_{mk} |\bpsi_{k'}^{\hermitian}\bpsi_{k}|^{2}}$ denotes the variance of the downlink channel estimate $\hat{d}_{kk}$.
\end{proposition}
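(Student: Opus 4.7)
The plan is to derive the SE via the standard ``useful signal plus uncorrelated effective noise'' decomposition, adapted to the downlink-training setting in which UE $k$ possesses the estimate $\hat d_{kk}$ from \eqref{eq:DLChannelEstimate}. Starting from \eqref{eq:DLDataReceive2} and substituting the aging decomposition \eqref{eq:d_kk'[n]} into the desired-user term, I would rewrite
$$r_{\text{d},k}[n] \;=\; \sqrt{\Ed}\,\rho_{k}[n]\,\hat d_{kk}\,q_{k}[n] \;+\; n_{\text{eff},k}[n],$$
with $n_{\text{eff},k}[n]$ collecting the desired-user residual $\sqrt{\Ed}\rho_{k}[n]\Tilde d_{kk}q_{k}[n]$, the aging innovation $\sqrt{\Ed}\bar\rho_{k}[n]z_{kk}[n]q_{k}[n]$, the multi-user interference $\sqrt{\Ed}\sum_{k'\neq k}d_{kk'}[n]q_{k'}[n]$, and the thermal noise $w_{\text{d},k}[n]$. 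MMSE orthogonality of $\hat d_{kk}$ and $\Tilde d_{kk}$, independence of $\mathbf{z}_{mk}[n]$ from every time-zero quantity, and the zero-mean mutually uncorrelated data symbols $\{q_{k'}[n]\}$ together render $n_{\text{eff},k}[n]$ uncorrelated with the kept signal term.

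Next I would invoke the worst-case Gaussian achievable-rate bound conditioned on $\hat d_{kk}$ and, exploiting the near-Gaussianity of the effective channel established in Figure~\ref{fig:KLdivergence}, apply a hardening-type approximation that moves the outer expectation inside the logarithm to obtain
$$\SE_{k}^{\text{CF},\DT}[n] \;\approx\; \log_{2}\!\left(1+\frac{\rho_{k}^{2}[n]\,\Ed\,\expect\{|\hat d_{kk}|^{2}\}}{\expect\{|n_{\text{eff},k}[n]|^{2}\}}\right).$$
For the numerator I would split $\expect\{|\hat d_{kk}|^{2}\}=|\expect\{\hat d_{kk}\}|^{2}+\kappa_{k}$, with the unbiased-MMSE identity $\expect\{\hat d_{kk}\}=\expect\{d_{kk}\}=\sum_{m}L\sqrt{\eta_{mk}}\gamma_{mk}$ (from $\expect\{\g_{mk}^{\transpose}\hat\g_{mk}^{*}\}=L\gamma_{mk}$ and $\bphi_{k}^{\hermitian}\bphi_{k}=1$) supplying the first numerator piece and $\kappa_{k}$ -- already computed in the statement -- supplying the second.

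The denominator is then a variance-accounting exercise. The desired-user residual contributes $\Ed\rho_{k}^{2}[n]\variance\{\Tilde d_{kk}\}+\Ed\bar\rho_{k}^{2}[n]\variance\{z_{kk}[n]\}$; using the precomputed $\variance\{d_{kk}[n]\}=\varsigma_{kk}$ (which is $n$-independent because $\rho_{k}^{2}[n]+\bar\rho_{k}^{2}[n]=1$), together with $\variance\{\Tilde d_{kk}\}=\varsigma_{kk}-\kappa_{k}$ and $\variance\{z_{kk}[n]\}=\varsigma_{kk}$, this collapses to $\Ed(\varsigma_{kk}-\rho_{k}^{2}[n]\kappa_{k})$. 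The multi-user interference yields $\Ed\sum_{k'\neq k}\expect\{|d_{kk'}[n]|^{2}\}=\Ed\sum_{k'\neq k}(\varsigma_{kk'}+|\mu_{kk'}[n]|^{2})$, and squaring $\mu_{kk'}[n]$ from the moment list preceding the proposition produces the $\rho_{k}^{2}[n](\sum_{m}L\sqrt{\eta_{mk'}}\gamma_{mk'}\beta_{mk}/\beta_{mk'})^{2}|\bphi_{k'}^{\hermitian}\bphi_{k}|^{2}$ piece. Thermal noise contributes the trailing $1$, and combining the three pieces reproduces \eqref{eq:CF_SINR_DT}.

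The main obstacle is the pilot-contamination bookkeeping required to (i) evaluate $\expect\{\g_{mk}^{\transpose}\hat\g_{mk'}^{*}\}$ and $\expect\{|\g_{mk}^{\transpose}\hat\g_{mk'}^{*}|^{2}\}$ for $k'\neq k$, which means unpacking $\hat\g_{mk'}=c_{mk'}\breve{\mathbf{y}}_{\text{up},mk'}$ from \eqref{eq:ULPilotProcessed} and tracking that the correlator output contains $\g_{mk}$ whenever $\bphi_{k}^{\hermitian}\bphi_{k'}\neq 0$, and (ii) verify that $\variance\{d_{kk'}[n]\}$ is genuinely $n$-independent and equals $\varsigma_{kk'}$ -- a cancellation that rests on the independence of $\mathbf{z}_{mk}[n]$ from the time-zero quantities and on the $\rho^{2}+\bar\rho^{2}=1$ identity, with the cross terms $\expect\{g_{mk}^{\transpose}[0]\hat g_{mk'}^{*}\cdot z_{mk}^{\hermitian}[n]\hat g_{mk'}\}$ vanishing. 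The hardening-type Gaussian approximation itself is not provable in closed form and must be justified operationally via Figure~\ref{fig:KLdivergence}, which is precisely why the statement advertises the SE as ``approximate'' rather than exact.
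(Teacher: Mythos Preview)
Your proposal is correct and follows essentially the same route as the paper: decompose $d_{kk}[n]$ via \eqref{eq:d_kk'[n]}, apply the side-information capacity bound conditioned on $\hat d_{kk}$ (with the Gaussian approximation to pass from uncorrelated to independent so that the conditional mean of the residual vanishes), then push the outer expectation through the logarithm and evaluate the same four second-moment quantities $\expect\{|\hat d_{kk}|^{2}\}$, $\expect\{|\Tilde d_{kk}|^{2}\}$, $\expect\{|z_{kk}[n]|^{2}\}$, and $\expect\{|d_{kk'}[n]|^{2}\}$. The only cosmetic difference is that the paper first writes the general conditional-expectation bound and then specializes, whereas you split the signal term upfront; the moment calculations you sketch (including the $\varsigma_{kk}-\rho_{k}^{2}[n]\kappa_{k}$ collapse and the pilot-contamination bookkeeping via $c_{mk'}\breve{\mathbf y}_{\text{up},mk'}$) are exactly those carried out in the appendix.
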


\begin{proof}
See Appendix \ref{app:proof_proposition_1}.
\end{proof}

From the numerator in the SINR expression, it is clear that the coherent beamforming gain decreases with increasing transmission index $n$ which is the effect of channel aging. The first term in the denominator of the SINR represents the multi-user interference due to channel aging and uplink pilot contamination. The term $\Ed\left(\varsigma_{kk} - \rho_{k}^{2}[n]\kappa_{k}\right)$ represents the variance of the error in the downlink CSI available at the receiver due to channel estimation and aging, and it increases with $n$. The last term represents the variance of the normalized noise. Thus, we see that user mobility not only degrades the coherent beamforming gain but also incurs additional multi-user interference caused by channel aging and pilot contamination. The result is that the downlink SE decreases at higher transmission indices.

Note that by plugging $n = 0$ in \eqref{eq:CF_SE_DT}, we obtain the expression for the approximate downlink SE when the effect of a time-varying channel is ignored, which matches with the result in~\cite{interdonato_tcom_2021}.

Since the downlink SE in \eqref{eq:CF_SE_DT} varies with the index $n$, it is useful to define an average measure of the downlink SE across a set of symbol transmissions. For a cell-free mMIMO network supported by downlink training, the \textit{average downlink SE} across $\Tframe$ symbols is defined as
\begin{align}\label{eq:AV_SE_DT}
    \overline{\SE}_{k}^{\text{CF},\DT}[\Tframe] \triangleq \frac{\Tdd}{\Tframe}\left(\frac{1}{\Tdd}\sum_{n=\Tup + \Tdp}^{\Tframe-1} \SE_{k}^{\text{CF},\DT}[n]\right).
\end{align}
Based on the above definition, the \textit{average sum-SE} of the cell-free network can be computed as
\begin{align}\label{eq:AV_SUM_SE_DT}
    \overline{\SE}_{\text{sum}}^{\text{CF},\DT}[\Tframe] \triangleq \sum_{k=1}^{K}\overline{\SE}_{k}^{\text{CF},\DT}[\Tframe].
\end{align}

\subsection{Performance With Statistical CSI}
Under the assumptions of a time-varying channel, non-orthogonal uplink pilots, maximum-ratio precoding, and i.i.d Rayleigh fading, a closed-form expression for the $n$th-instant approximate achievable downlink SE of the $k$th UE relying on statistical CSI in a cell-free massive MIMO network was derived in \cite{zheng_twc_2021}. We re-write it below using our notations,
\begin{align}\label{eq:CF_SE_CSI}
    \SE_{k}^{\CSI}[n] = \log_{2}\left(1 + \text{SINR}_{k}^{\text{CF},\CSI}[n]\right),
\end{align}
where $\text{SINR}_{k}^{\text{CF},\CSI}[n]$ denotes the $n$th-instant effective downlink SINR of the $k$th UE relying on statistical CSI and is given by
\begin{align}\label{eq:CF_SINR_CSI}
    \text{SINR}_{k}^{\text{CF},\CSI}[n] = \frac{\rho_{k}^{2}[n]\Ed\left(\sum_{m=1}^{M}L\sqrt{\eta_{mk}}\gamma_{mk}\right)^{2}}{\Ed\sum_{k'\neq k}\left(\varsigma_{kk'} + \rho_{k}^{2}[n]\left(\sum_{m=1}^{M}L\sqrt{\eta_{mk'}}\gamma_{mk'}\frac{\beta_{mk}}{\beta_{mk'}}\right)^{2}|\bphi_{k'}^{\hermitian} \bphi_{k}|^{2}\right) + \Ed\varsigma_{kk} + 1},
\end{align}
in which $\varsigma_{kk'} = \sum_{m=1}^{M}L\eta_{mk'}\gamma_{mk'}\beta_{mk}$ denotes the variance of the effective downlink channel $d_{kk'}[n]~\forall~n$.

Comparing the SINRs in \eqref{eq:CF_SINR_CSI} and \eqref{eq:CF_SINR_DT}, we observe that the two expressions differ by the term $\rho_{k}^{2}[n]\Ed\kappa_{k}$ which gets added to the numerator and subtracted from the denominator in \eqref{eq:CF_SINR_DT}. The quantity $\kappa_{k}$ represents the gain introduced due to downlink training and it embeds the effect of the downlink pilot contamination \cite{interdonato_twc_2019}. The multiplication with $\rho_{k}^{2}[n]$ signifies that owing to channel aging, the additional gain due to downlink training diminishes at higher transmission indices and/or higher user mobility.

We define the average downlink SE of the $k$th UE in a cell-free mMIMO network relying on statistical CSI as
\begin{align}\label{eq:AV_SE_CSI}
    \overline{\SE}_{k}^{\text{CF},\CSI}[\Tframe] \triangleq \frac{1}{\Tframe}\sum_{n=\Tup}^{\Tframe-1} \SE_{k}^{\text{CF},\CSI}[n],
\end{align}
where $\Tframe = \Tup + \Tdd$. The average sum-SE of the network is computed as
\begin{align}\label{eq:AV_SUM_SE_CSI}
    \overline{\SE}_{\text{sum}}^{\text{CF},\CSI}[\Tframe] \triangleq \sum_{k=1}^{K}\overline{\SE}_{k}^{\text{CF},\CSI}[\Tframe].
\end{align}

\section{Performance of Cellular mMIMO}
\label{sec:performance_cellular}

The downlink SE of a mobile UE in a cellular mMIMO network can be analyzed using a similar approach as the above. Such an analysis is not available in the literature, and we present it in this section. Consider a multi-cell mMIMO network comprising $L_{c}$ cells. Each BS is equipped with $M_{c}$ antennas and serves $K_{c}$ UEs. In total, there are $K$ UEs moving in the network. Thus, $K = L_{c}K_{c}$. We denote the $k$th UE ($k = 1, \dots, K_{c}$) in cell $l$ ($l = 1, \dots, L_{c}$) as $\textrm{UE}_{lk}$. We assume that the UEs in each cell are assigned mutually orthogonal pilots on the uplink and the downlink and that the $k$th UE in each cell uses the same pilot sequence (i.e., pilot reuse one). Thus, there will exist only inter-cell pilot contamination. The channel between BS $j$ and $\UE_{lk}$ on the $n$th transmission is modeled as $\g_{lk}^{j}[n] = \rho_{lk}[n]\g_{lk}^{j}[0] + \bar{\rho}_{lk}[n]\z_{lk}^{j}[n]$ where $\rho_{lk}[n] = J_{0}\left(2 \pi f_{\textrm{c}} v_{lk} T_{\textrm{s}}/c\right)$ represents the temporal correlation coefficient of $\UE_{lk}$ at the $n$th instant with the relative velocity of $\UE_{lk}$ denoted by  $v_{lk}$, $\bar{\rho}_{lk}[n] = \sqrt{1 - \rho_{lk}^{2}[n]}$ and $\z_{lk}^{j}[n]$ represents the innovation due to channel aging. Under the i.i.d Rayleigh fading model, both $\g_{lk}^{j}[n]$ and $\z_{lk}^{j}[n]$ conform to $\CN\left(0, \beta_{lk}^{j}\I_{M_c}\right)$ distribution with $\beta_{lk}^{j}$ denoting the large-scale fading coefficient between BS $j$ and $\UE_{lk}$. Then, we have the following two propositions:

\begin{proposition}
With downlink training, the $n$th-instant approximate achievable downlink SE of $\UE_{lk}$ in the cellular mMIMO network described above is
\begin{align}\label{eq:CELL_SE_DT}
    \SE_{lk}^{\text{cell},\DT}[n] = \log_{2}\left(1 + \text{SINR}_{lk}^{\text{cell},\DT}[n]\right),
\end{align}
where $\text{SINR}_{lk}^{\text{cell},\DT}[n]$ denotes the $n$th-instant effective downlink SINR of $\UE_{lk}$ in the presence of downlink training, given by
\begin{align}\label{eq:CELL_SINR_DT}
    \text{SINR}_{lk}^{\text{cell},\DT}[n] = \frac{\rho_{lk}^{2}[n]\left(M_{c}\sqrt{\eta_{lk}}\gamma_{lk}^{l}\right)^{2} + \rho_{lk}^{2}[n]\kappa_{lk}}{\sum_{l'=1}^{L_c}\sum_{i=1}^{K_c}M_{c}\eta_{l'i}\gamma_{l'i}^{l'}\beta_{lk}^{l'} + \rho_{lk}^{2}[n]\sum_{\substack{l'=1\\ l' \neq l}}^{L_c}\left(M_{c}\sqrt{\eta_{l'k}}\gamma_{l'k}^{l'}\frac{\beta_{lk}^{l'}}{\beta_{l'k}^{l'}}\right)^{2} - \rho_{lk}^{2}[n]\kappa_{lk} +   \frac{1}{\Ed}},
\end{align}
where $\kappa_{lk} = \frac{\Tdp\Edp\left(M_{c}\eta_{lk}\gamma_{lk}^{l}\beta_{lk}^{l}\right)^{2}}{1 + \Tdp\Edp\sum_{l'=1}^{L_c}M_{c}\eta_{l'k}\gamma_{l'k}^{l'}\beta_{lk}^{l'}}$ with $\eta_{lk}$ denoting the power control coefficient of $\UE_{lk}$, $\gamma_{lk}^{l}$ the variance of the uplink channel estimate between $\UE_{lk}$ and the $l$th BS, $\Tdp$ the length of the downlink pilots and $\Edp$ the normalized downlink transmit SNR.
\begin{proof}
See Appendix \ref{app:proof_proposition_2}.
\end{proof}
\end{proposition}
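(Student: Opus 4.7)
The proof mirrors the derivation of Proposition~1, specialized to the cellular architecture in which a single BS performs the MR precoding for each user and intra-cell pilots are orthogonal. The plan is to (i) write the received data signal at $\UE_{lk}$ and isolate the desired contribution; (ii) evaluate the statistics of the downlink MMSE estimate to identify the numerator; and (iii) tabulate the variances of all intra- and inter-cell interference contributions to assemble the denominator.

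First, starting from $r_{lk}[n] = \sqrt{\Ed}\sum_{j,i}\sqrt{\eta_{ji}}\g_{lk}^{j,\transpose}[n]\hat{\g}_{ji}^{j,*} q_{ji}[n] + w_{lk}[n]$, I isolate the $(j,i)=(l,k)$ summand as the desired signal and define the effective desired channel $D_{lk}[n] \triangleq \sqrt{\eta_{lk}}\g_{lk}^{l,\transpose}[n]\hat{\g}_{lk}^{l,*}$. Substituting the channel-aging decomposition \eqref{eq:ChannelAging} and the MMSE orthogonality $D_{lk} = \hat{D}_{lk} + \tilde{D}_{lk}$ yields $D_{lk}[n] = \rho_{lk}[n]\hat{D}_{lk} + \rho_{lk}[n]\tilde{D}_{lk} + \bar{\rho}_{lk}[n]Z_{lk}[n]$. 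Applying the use-and-forget bound—treating $\rho_{lk}[n]\hat{D}_{lk}$ as the known channel—the signal power becomes $\Ed\rho_{lk}^{2}[n]\expect\{|\hat{D}_{lk}|^{2}\} = \Ed\rho_{lk}^{2}[n]\bigl(|\expect\{\hat{D}_{lk}\}|^{2} + \variance(\hat{D}_{lk})\bigr)$. A direct evaluation gives $\expect\{\hat{D}_{lk}\} = M_{c}\sqrt{\eta_{lk}}\gamma_{lk}^{l}$, producing the coherent beamforming term in the numerator, while $\variance(\hat{D}_{lk}) \equiv \kappa_{lk}$ provides the second numerator term.

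Second, I compute $\kappa_{lk}$ by correlating the downlink pilot received at $\UE_{lk}$ with $\bpsi_{k}$. Intra-cell orthogonality collapses the sum to $\breve{y}_{\text{dp},lk} = \sqrt{\Tdp\Edp}\sum_{l'=1}^{L_{c}}\sqrt{\eta_{l'k}}\g_{lk}^{l',\transpose}\hat{\g}_{l'k}^{l',*} + \tilde{w}_{\text{dp},lk}$, so only the co-pilot users in other cells contaminate $\UE_{lk}$. Evaluating $\covariance(D_{lk},\breve{y}_{\text{dp},lk})$ and $\variance(\breve{y}_{\text{dp},lk})$—the latter broken into terms where the estimate $\hat{\g}_{l'k}^{l'}$ is correlated with $\g_{lk}^{l'}$ through uplink pilot sharing—and substituting into the scalar MMSE variance formula $\kappa_{lk} = |\covariance|^{2}/\variance(\breve{y}_{\text{dp},lk})$ produces the claimed closed form. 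Third, the remaining terms $\sqrt{\eta_{l'i}}\g_{lk}^{l',\transpose}[n]\hat{\g}_{l'i}^{l',*} q_{l'i}[n]$ enter the denominator via their variances. For $(l',i)\neq(l,k)$, the identity $\expect\{\|\hat{\g}_{l'i}^{l'}\|^{2}\} = M_{c}\gamma_{l'i}^{l'}$ together with independence of $\g_{lk}^{l'}$ and $\hat{\g}_{l'i}^{l'}$ (when pilots are not shared) yields the variance $M_{c}\eta_{l'i}\gamma_{l'i}^{l'}\beta_{lk}^{l'}$, assembling the first double sum. For $(l',k)$ with $l'\neq l$, uplink pilot reuse creates the coherent mean $\expect\{\g_{lk}^{l',\transpose}\hat{\g}_{l'k}^{l',*}\} = M_{c}\gamma_{l'k}^{l'}\beta_{lk}^{l'}/\beta_{l'k}^{l'}$, whose squared magnitude scaled by $\rho_{lk}^{2}[n]$ produces the pilot-contamination term. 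The subtraction $-\rho_{lk}^{2}[n]\kappa_{lk}$ arises because the coherent portion of the signal has already been credited to the numerator and must be removed from the aggregate interference-plus-error term, exactly as in the cell-free derivation.

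The main obstacle is the bookkeeping for the MMSE derivation of $\kappa_{lk}$: every contaminating inter-cell term contributes both to the covariance with $D_{lk}$ and to the variance of $\breve{y}_{\text{dp},lk}$, and the ratios $\beta_{lk}^{l'}/\beta_{l'k}^{l'}$ induced by the uplink MMSE estimator must be tracked carefully so that they combine into the denominator of $\kappa_{lk}$ in the stated form. The remaining steps are a routine specialization of the cell-free calculation, with sums over $m$ collapsing to a single BS index and the within-cell pilot orthogonality eliminating several cross terms.
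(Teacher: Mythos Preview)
Your proposal is correct and follows essentially the same route as the paper: specialize the side-information capacity bound of Proposition~1 to the cellular setting, decompose $d_{lkk}^{l}[n]$ via channel aging and MMSE orthogonality, and evaluate the second moments of the desired, estimation-error, innovation, and interference terms. One small inaccuracy in your bookkeeping description: the inter-cell contaminating terms $\sqrt{\eta_{l'k}}\g_{lk}^{l',\transpose}\hat{\g}_{l'k}^{l',*}$ for $l'\neq l$ contribute only to $\variance(\breve{y}_{\text{dp},lk})$ and \emph{not} to $\covariance(D_{lk},\breve{y}_{\text{dp},lk})$, since the channels to distinct BSs are independent; consequently the ratios $\beta_{lk}^{l'}/\beta_{l'k}^{l'}$ cancel in the variance computation and do not appear in the denominator of $\kappa_{lk}$, which is why the stated form contains only $\beta_{lk}^{l'}$.
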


\begin{proposition}
When the UEs rely on statistical CSI for data decoding, the $n$th-instant approximate achievable downlink SE of $\UE_{lk}$ in cellular mMIMO is
\begin{align}\label{eq:CELL_SE_CSI}
    \SE_{lk}^{\text{cell},\CSI}[n] = \log_{2}\left(1 + \text{SINR}_{lk}^{\text{cell},\CSI}[n]\right)
\end{align}
where $\text{SINR}_{lk}^{\text{cell},\CSI}[n]$ denotes the $n$th-instant effective downlink SINR of $\UE_{lk}$ relying on statistical CSI, given by
\begin{align}\label{eq:CELL_SINR_CSI}
    \text{SINR}_{lk}^{\text{cell},\CSI}[n] = \frac{\rho_{lk}^{2}[n]\left(M_{c}\sqrt{\eta_{lk}}\gamma_{lk}^{l}\right)^{2}}{\sum_{l'=1}^{L_c}\sum_{i=1}^{K_c}M_{c}\eta_{l'i}\gamma_{l'i}^{l'}\beta_{lk}^{l'} + \rho_{lk}^{2}[n]\sum_{\substack{l'=1\\ l' \neq l}}^{L_c}\left(M_{c}\sqrt{\eta_{l'k}}\gamma_{l'k}^{l'}\frac{\beta_{lk}^{l'}}{\beta_{l'k}^{l'}}\right)^{2} + \frac{1}{\Ed}}.
\end{align}
\begin{proof}
See Appendix \ref{app:proof_proposition_3}.
\end{proof}
\end{proposition}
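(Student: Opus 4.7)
The plan is to apply the use-and-then-forget (UatF) bound, which is the standard technique for deriving closed-form achievable rates in massive MIMO with imperfect CSI at the receiver. First I would write the signal received by $\UE_{lk}$ at the $n$th instant as a double sum over transmitting BSs and their served UEs, $r_{lk}[n] = \sqrt{\Ed}\sum_{l'=1}^{L_c}\sum_{i=1}^{K_c}\sqrt{\eta_{l'i}}(\g_{lk}^{l'}[n])^{\transpose}\hat{\g}_{l'i}^{l'*} q_{l'i}[n] + w_{lk}[n]$, and identify the effective desired coefficient as $\sqrt{\eta_{lk}}(\g_{lk}^{l}[n])^{\transpose}\hat{\g}_{lk}^{l*}$. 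Under statistical CSI decoding, only the mean of this coefficient is assumed known at the UE, and the zero-mean channel fluctuation, the co-cell and inter-cell interference, and the noise are all lumped into an uncorrelated worst-case Gaussian nuisance term. The resulting SINR lower bound has numerator $\Ed\,|\expect\{\sqrt{\eta_{lk}}(\g_{lk}^{l}[n])^{\transpose}\hat{\g}_{lk}^{l*}\}|^{2}$ and denominator $\Ed$ times the sum of second moments of the remaining effective channels plus unity.

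Next I would substitute the channel-aging decomposition $\g_{lk}^{l'}[n] = \rho_{lk}[n]\g_{lk}^{l'} + \bar{\rho}_{lk}[n]\z_{lk}^{l'}[n]$ and use the independence of the innovation $\z_{lk}^{l'}[n]$ from every uplink estimate $\hat{\g}_{l'i}^{l'}$. For the numerator, MMSE orthogonality gives $\expect\{(\g_{lk}^{l})^{\transpose}\hat{\g}_{lk}^{l*}\} = \expect\{\|\hat{\g}_{lk}^{l}\|^{2}\} = M_c\gamma_{lk}^{l}$, yielding the coherent beamforming term $\rho_{lk}^{2}[n](M_c\sqrt{\eta_{lk}}\gamma_{lk}^{l})^{2}$. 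For the denominator I would split the double sum over $(l',i)$ according to pilot reuse one. Whenever $i\neq k$, the channel $\g_{lk}^{l'}$ is statistically independent of $\hat{\g}_{l'i}^{l'}$, so the associated second moment reduces to $M_c\eta_{l'i}\gamma_{l'i}^{l'}\beta_{lk}^{l'}$, contributing directly to the first sum in \eqref{eq:CELL_SINR_CSI}. When $i=k$ and $l'\neq l$, the pilot-contamination coupling implies the scaling identity $\hat{\g}_{l'k}^{l'} = (\beta_{l'k}^{l'}/\beta_{lk}^{l'})\hat{\g}_{lk}^{l'}$; combined with the orthogonal decomposition $\g_{lk}^{l'} = \hat{\g}_{lk}^{l'} + \tilde{\g}_{lk}^{l'}$ this lets $\expect\{|(\g_{lk}^{l'}[n])^{\transpose}\hat{\g}_{l'k}^{l'*}|^{2}\}$ split into a coherent squared-mean piece $\rho_{lk}^{2}[n](M_c\gamma_{l'k}^{l'}\beta_{lk}^{l'}/\beta_{l'k}^{l'})^{2}$ and an incoherent residual $M_c\gamma_{l'k}^{l'}\beta_{lk}^{l'}$, which respectively reproduce the second denominator sum and the corresponding $(l',k)$ entry already absorbed in the first. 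Normalizing by $\Ed$ to move the unit noise term into $1/\Ed$ matches the stated form.

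The main obstacle will be bookkeeping in the pilot-contaminated second moments: one must apply the $\hat{\g}_{l'k}^{l'} \propto \hat{\g}_{lk}^{l'}$ identity consistently, verify that cross terms between the $\rho_{lk}[n]$ and $\bar{\rho}_{lk}[n]$ components vanish in expectation (using the independence of $\z_{lk}^{l'}[n]$ from both $\hat{\g}$ and $\tilde{\g}$), and avoid double-counting the coherent $(M_c\gamma_{l'k}^{l'}\beta_{lk}^{l'}/\beta_{l'k}^{l'})^{2}$ part when isolating it from the total second moment that feeds the first denominator sum. Once these moment computations are tabulated cleanly, the remainder is algebraic regrouping into \eqref{eq:CELL_SINR_CSI}.
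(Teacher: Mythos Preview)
Your proposal is correct and follows essentially the same route as the paper: both apply the use-and-then-forget (hardening) bound by splitting the received signal into the mean of the desired effective channel and an uncorrelated effective-noise term, then evaluate the required second moments under channel aging and pilot reuse one. The paper's own proof is only a short sketch that points to the hardening bound and leaves the moment bookkeeping implicit; the detailed decomposition you outline (the pilot-contamination scaling $\hat{\g}_{l'k}^{l'}=(\beta_{l'k}^{l'}/\beta_{lk}^{l'})\hat{\g}_{lk}^{l'}$, the vanishing of the $\rho_{lk}[n]\bar{\rho}_{lk}[n]$ cross terms, and the absorption of the self-variance and incoherent residuals into the first denominator sum) is exactly what fills in those gaps.
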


As in cell-free mMIMO, the difference between the  SINRs with and without downlink training in \eqref{eq:CELL_SINR_DT} and \eqref{eq:CELL_SINR_CSI} for cellular mMIMO, is an addition and subtraction of the term $\rho_{lk}^{2}[n]\kappa_{lk}$ in the numerator and denominator of \eqref{eq:CELL_SINR_CSI}, respectively. This suggests that the effect of downlink training on the performance of a UE is similar in both cell-free and cellular mMIMO networks, namely an increase in the instantaneous downlink SE. Furthermore, this gain reduces at higher transmission indices or higher mobility due to channel aging. However, the relative gain in performance over the statistical CSI scheme may be different in the two networks owing to the different degrees of channel hardening. For reasons mentioned in Section \ref{sec:introduction}, it is expected that the boost in the downlink SE due to downlink training may not be as high in cellular mMIMO as in cell-free mMIMO.

Under an equal number of AP/BS antennas in both cell-free and cellular networks, we expect that the average downlink SE in cell-free mMIMO will be substantially higher than that in cellular mMIMO at all levels of mobility with both downlink training and statistical CSI, owing to the distributed processing enabled by cell-free mMIMO.

Finally, note that, upon substituting $n = 0$, the expressions in \eqref{eq:CELL_SE_DT} and \eqref{eq:CELL_SE_CSI} reduce to those given in \cite{zuo_tvt_2016} where the effect of a time-varying channel is ignored.

\section{Numerical Results}\label{sec:numerical_results}

In this section, we provide numerical results that demonstrate the effect of user mobility on the downlink performance of cell-free and cellular mMIMO. We begin by describing the simulation setup. Then, we present the performance of the two networks.

\subsection{Simulation Setup}
We consider the propagation model proposed in \cite{bjornson_twc_2020}. Inside a $1\,\textrm{km}^{2}$ square region, for the cell-free setup, there are $M = 100$ APs each equipped with $L = 4$ antennas that are placed at points on a uniform grid. The APs serve $K = 40$ UEs that are uniformly distributed at random locations in the region. For the cellular setup, there are $L_c = 4$ BSs each equipped with $M_c = 100$ antennas which serve $K_c = 10$ UEs per cell. The simulation involves computing the average downlink SE of a UE and the sum-SE of the network for a given system realization and repeating the same for 400 realizations. Since the location of the UEs is random in each instance, we focus on the $90\%$-likely average downlink SE \cite{bjornson_twc_2020} and the mean of the sum-SE across all such instances.

The channels across multiple antennas of the AP/BS are spatially uncorrelated. We consider carrier frequency $2~\text{GHz}$, bandwidth $1~\text{MHz}$, noise figure $9~\text{dB}$, AP transmit power $200\text{mW}$ and UE transmit power $100\text{mW}$. The uplink/downlink pilot assignment for the cell-free setup is performed as per \cite[Algorithm 2]{interdonato_twc_2019} with orthogonal pilot reuse. Unless stated otherwise, we assume $\Tup = \Tdp = 10$ symbols in all the figures. The APs (BSs) transmit at full power and allocate uniformly among the $K$ ($K_c$) UEs that they serve. Thus, the power control coefficients are set as $\eta_{mk} = \left(L\sum_{k'=1}^{K}\gamma_{mk'}\right)^{-1}$ for the cell-free setup and $\eta_{ji} = \left(M_{c}\sum_{i'=1}^{K_c}\gamma_{ji'}^{j}\right)^{-1}$ for the cellular setup. All UEs are assumed to move at the maximum possible velocity $\Vmax$.

\subsection{Cell-Free mMIMO}

Fig.~\ref{fig:mostProbableRateCF_1} shows the variation in the $90\%$-likely average downlink SE with $V_{\textrm{max}}$ for different data duration. We plot the performance with no pilot contamination (\texttt{NoPC}) where $\tau_{\text{up}} = \tau_{\text{dp}} = K = 40$ symbols, and with pilot contamination (\texttt{PC}) where  $\tau_{\text{up}} = \tau_{\text{dp}} = 10$ symbols, for both the downlink training (\texttt{DT}) and the statistical CSI (\texttt{StCSI}) schemes. Thus, [\texttt{NoPC DT} $\Tdd=200$] incurs the highest relative training overhead and [\texttt{PC StCSI} $\Tdd = 500$] incurs the least. In all cases, the average SE decreases as $\Vmax$ increases, in line with our analytical results. At low user mobility, the effect of channel aging is small. Consequently, there is little loss in the downlink SE across transmission indices and the AP can obtain a higher average SE by transmitting more data symbols in a transmit frame. Thus, at $\Vmax = 5~\textrm{m/s}$, setting $\Tdd = 500$ symbols yields the higher average SE for all four sets of curves. Overall, it is worthwhile to obtain better quality channel estimates by using a longer pilot duration, avoiding pilot contamination and employing downlink training. Thus, [\texttt{noPC} \texttt{DT} $\Tdd = 500$] yields the highest average SE at low user mobility. This is followed by [\texttt{NoPC StCSI} $\Tdd = 500$] and [\texttt{PC DT} $\Tdd = 500$] which offer nearly the same performance by either avoiding pilot contamination and using statistical CSI or incurring pilot contamination and using downlink training. As expected, with pilot contamination, statistical CSI and shorter data duration ($\Tdd = 200$), the UEs achieve the worst performance. On the other hand, as $\Vmax$ increases, the downlink SE drops so quickly with the symbol index that more frequent re-estimation of the channel is necessary. Hence, in the high mobility regime ($\Vmax = 85$~m/s), it is better to transmit fewer data symbols in the transmit frame, i.e., $\Tdd = 200$ outperforms $\Tdd = 500$. Moreover, since avoiding pilot contamination and using downlink training both incur additional overhead, the use of both techniques is not beneficial at high mobility: [\texttt{PC DT} $\Tdd = 200$] offers the highest SE followed by [\texttt{noPC StCSI} $\Tdd = 200$] followed by [\texttt{noPC DT} $\Tdd = 200$] with [\texttt{PC StCSI} $\Tdd = 200$] offering the least SE among the four curves. Nonetheless, we see that the use of downlink training is important: up to $\sim\Vmax = 55$~m/s, [\texttt{noPC DT} $\Tdd = 200$] yields the highest average SE after which [\texttt{PC DT} $\Tdd = 200$] performs the best. Thus, overall, it is better to use downlink training than statistical CSI-based decoding in cell-free mMIMO.

\begin{figure}
    \centering
    \includegraphics[width=4.5in]{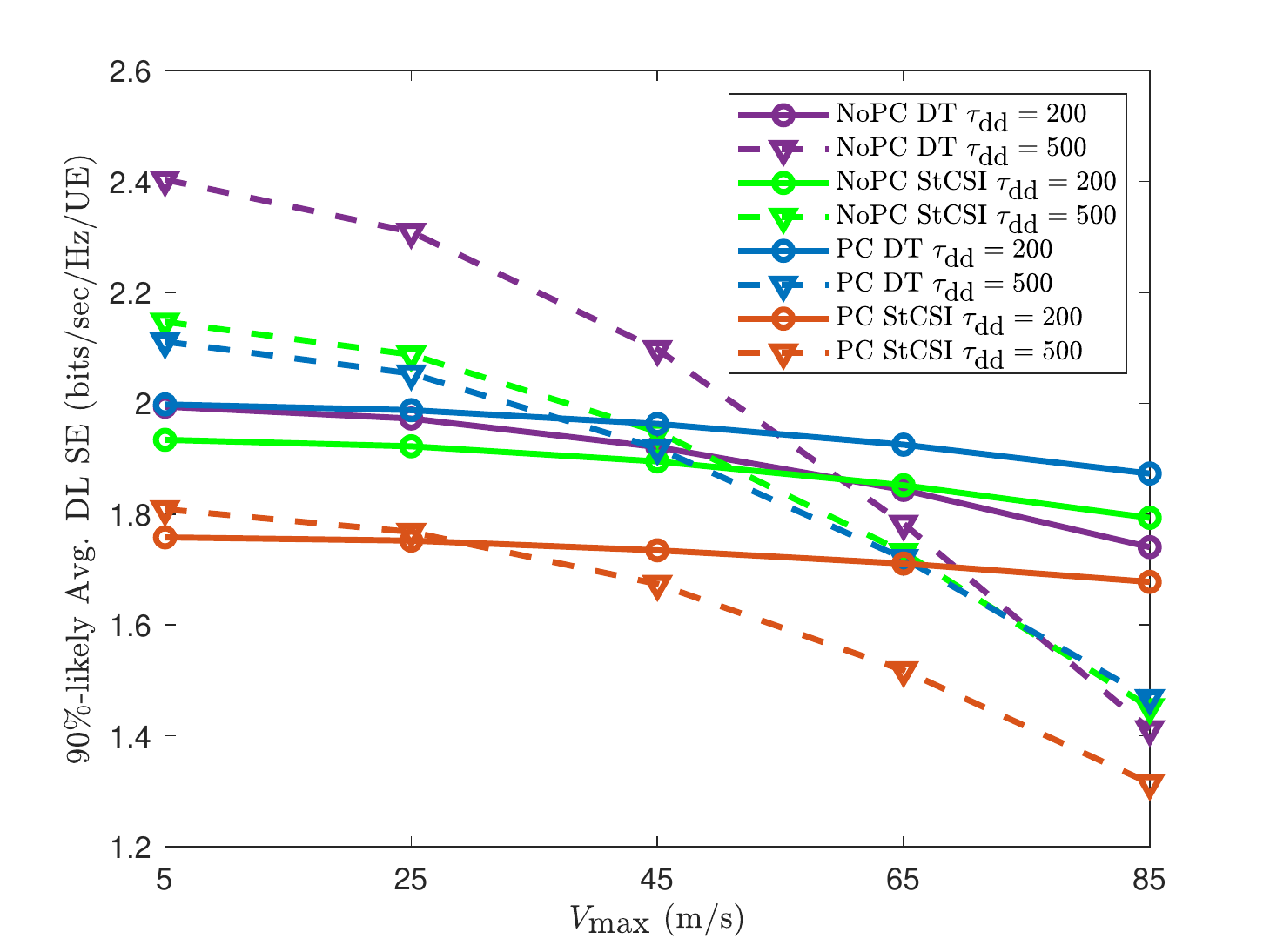}
    \caption{The $90\%$-likely average downlink SE  as a function of $V_{\textrm{max}}$. NoPC represents no pilot contamination as in $\Tup=\Tdp=40$ symbols for the downlink training scheme and $\Tdp = 0$ for the statistical CSI scheme, respectively. The performance gain due to downlink training reduces at higher user mobility.}
    \label{fig:mostProbableRateCF_1}
\end{figure}

Fig.~\ref{fig:mostProbableRateCF_2} shows a plot of the $90\%$-likely average downlink SE as a function of the data duration when $\Tup = \Tdp = 10$ symbols. When the number of data symbols in a frame is comparable to the length of the training interval (i.e. $10 \leq \Tdd \leq 50$ symbols), the statistical CSI scheme yields better average SE than the downlink training scheme. This is because for small data duration, the difference between the summation in \eqref{eq:AV_SE_DT} and \eqref{eq:AV_SE_CSI} is marginal and in such cases, the pre-sum factor dominates. However, the average SE improves with increasing $\Tdd$, and ultimately, the downlink training scheme ends up performing better. Furthermore, the average SE is a unimodal function of the data duration and it is maximized at a $\Tdd$ of around $150, 300$ and over $600$ symbols for $\Vmax = 85, 45$ and $5$~m/s, respectively. This behavior corroborates that of the analytical expression for the average SE in \eqref{eq:AV_SE_DT} and \eqref{eq:AV_SE_CSI} and is because the downlink SE defined in \eqref{eq:CF_SE_DT} and \eqref{eq:CF_SE_CSI} is monotonically decreasing with~$n$. When the system is operated at the optimal value of $\Tdd$, the downlink training scheme outperforms the statistical CSI scheme, achieving about $15\%$ better~SE.

\begin{figure}
    \centering
    \includegraphics[width=4.5in]{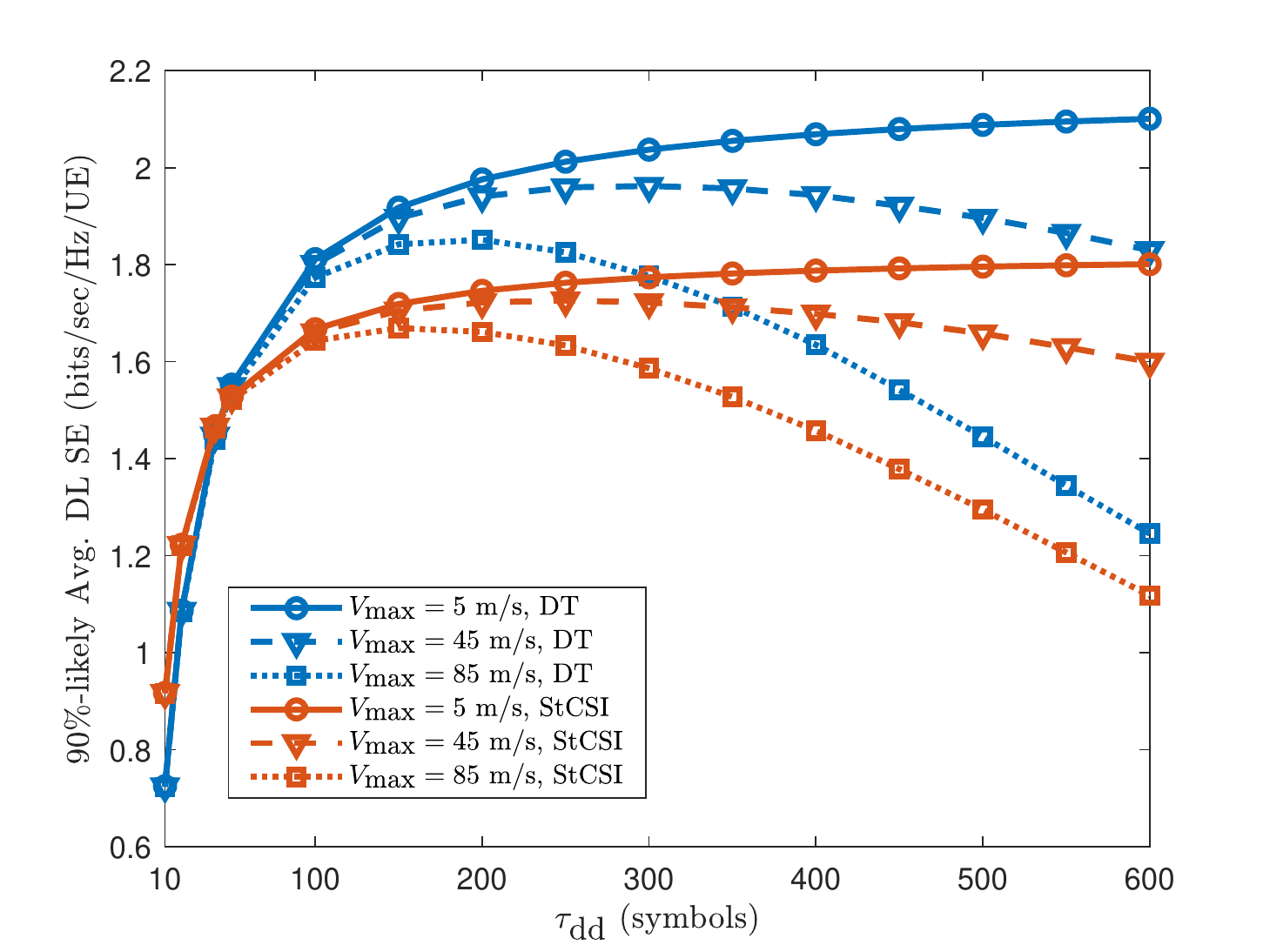}
    \caption{Plot of the $90\%$-likely average downlink SE versus the data duration $\Tdd$. Statistical CSI scheme outperforms the downlink training scheme for small data percentages.}
    \label{fig:mostProbableRateCF_2}
\end{figure}

In Figure \ref{fig:mostProbableRateCF_2_comparePC}, we plot the $90\%$-likely average downlink SE with downlink training (fig. \ref{fig:mostProbableRateCF_comparePC_2_DT}) and statistical CSI (fig. \ref{fig:mostProbableRateCF_comparePC_2_stCSI}) as a function of the data duration. In both cases, we contrast the performance obtained with no pilot contamination against that with pilot contamination. With downlink training, at low mobility, it is better to use a longer data duration and avoid pilot contamination. In contrast, at high mobility, the channel estimation overhead comes at a premium, so it is better to use a shorter data duration ($\sim 150$ symbols) and a shorter pilot length ($10$ symbols) even though it incurs pilot contamination. This holds true at low mobility even with statistical CSI, but at high mobility, it is important to obtain good initial channel estimates at the APs, and hence we see that the no pilot contamination scheme with $\Tup = 40$ symbols offers the best performance at $\Tdd = 200$ symbols. Similar to the previous figure, we see that when the data duration is chosen to maximize the average SE, the downlink training outperforms the statistical CSI scheme.

\begin{figure}
    \centering
    \subfigure[]
    {
        \includegraphics[width=3.0in]{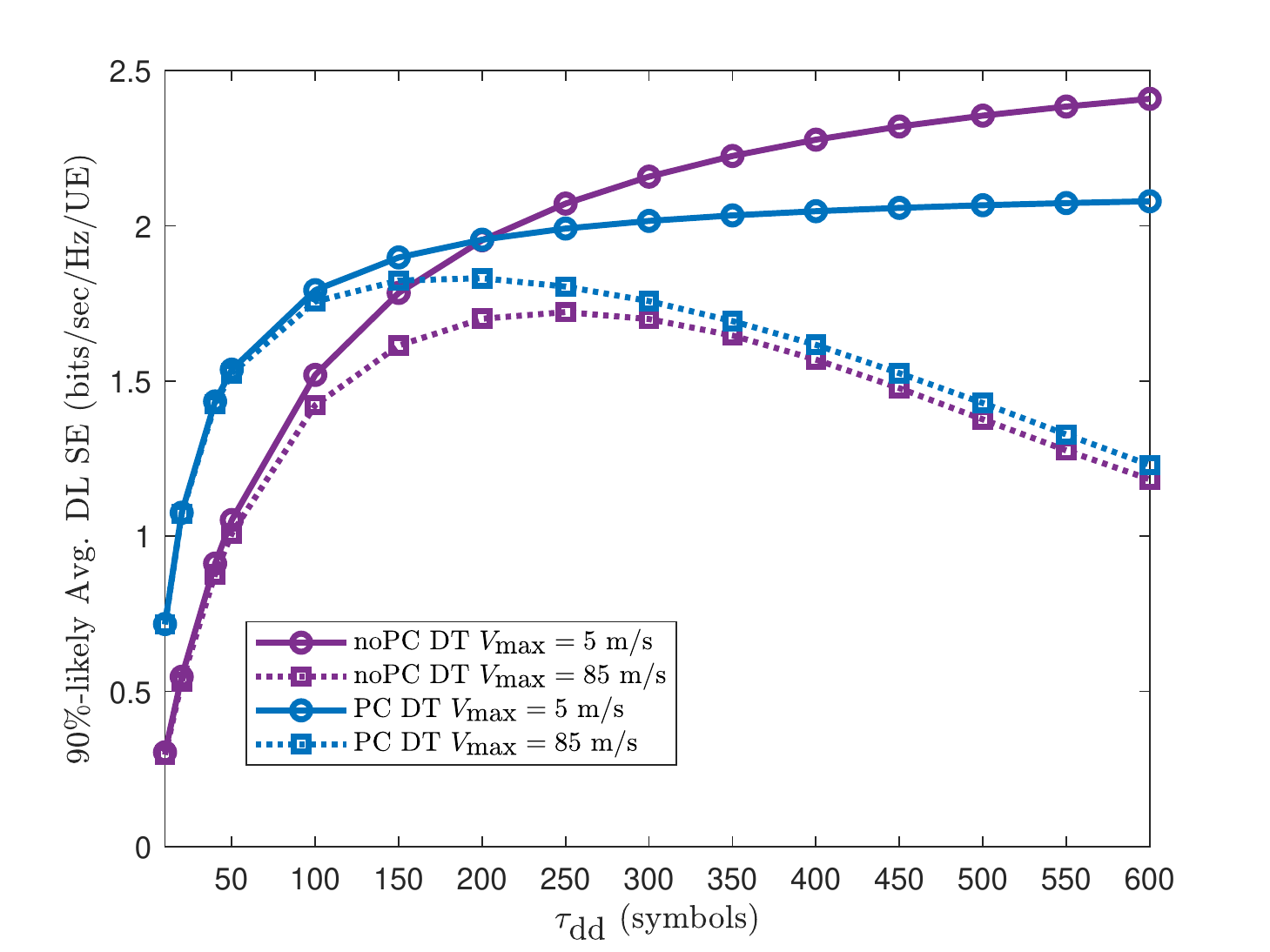}
        \label{fig:mostProbableRateCF_comparePC_2_DT}
    }
    \subfigure[]
    {
        \includegraphics[width=3.0in]{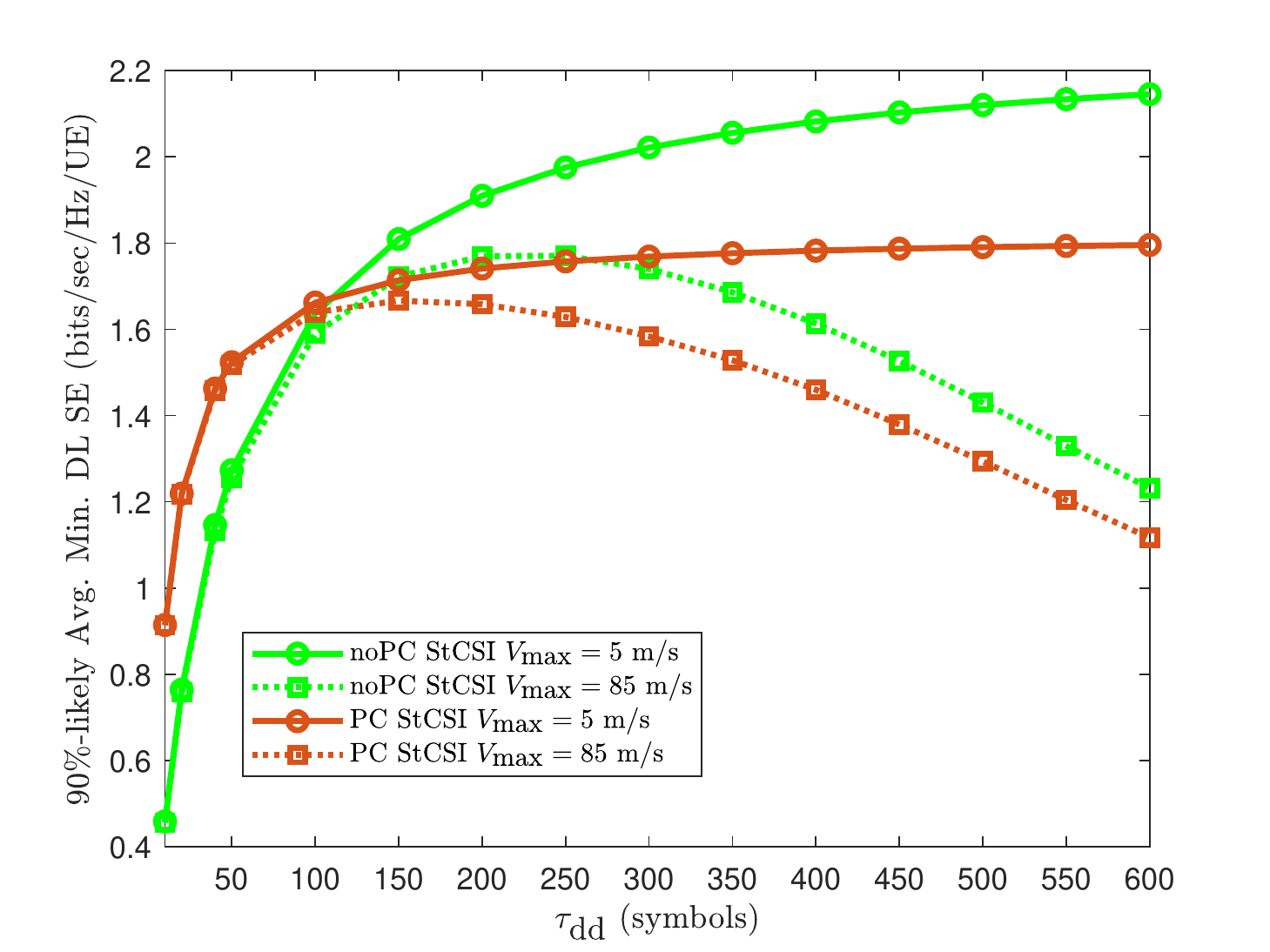}
        \label{fig:mostProbableRateCF_comparePC_2_stCSI}
    }
    \caption{Plots of the $90\%$-likely average downlink SE with and without downlink training.}
    \label{fig:mostProbableRateCF_2_comparePC}
\end{figure}

Next, we focus on the impact of the relative uplink/downlink pilot lengths on the downlink performance of a UE. Figure \ref{fig:mostProbableRateCF_3} shows a plot of the $90\%$-likely average downlink SE with $\Vmax$ for different combinations of the uplink and downlink pilot lengths. The total length of the training interval is kept fixed across all curves, equal to 30 symbols. In such a case, a higher $\Tup$ would correspond to more number of available pilots on the uplink and fewer pilots on the downlink which in turn would imply lesser pilot contamination on the uplink and more pilot contamination on the downlink. From the figure, it can be seen that the average SE for a given $\Vmax$ tends to increase as $\Tup$ increases. This means that uplink pilot contamination exerts more control on the downlink performance of a UE than downlink pilot contamination does. The reason is that channel estimates impaired by uplink pilot contamination pass on the imperfectness to the subsequent downlink training and data transmission stages via the transmit precoding step. Although this phenomenon was brought up in \cite{interdonato_twc_2019}, figure \ref{fig:mostProbableRateCF_3} conveys an important corollary to it: the relative gain in the average downlink SE due to a finite increment in $\Tup$ (roughly) reduces as the pilot length approaches its upper limit (equal to 30 symbols) even when the increment is kept fixed (equal to 5 symbols).

\begin{figure}
    \centering
    \includegraphics[width=4.5in]{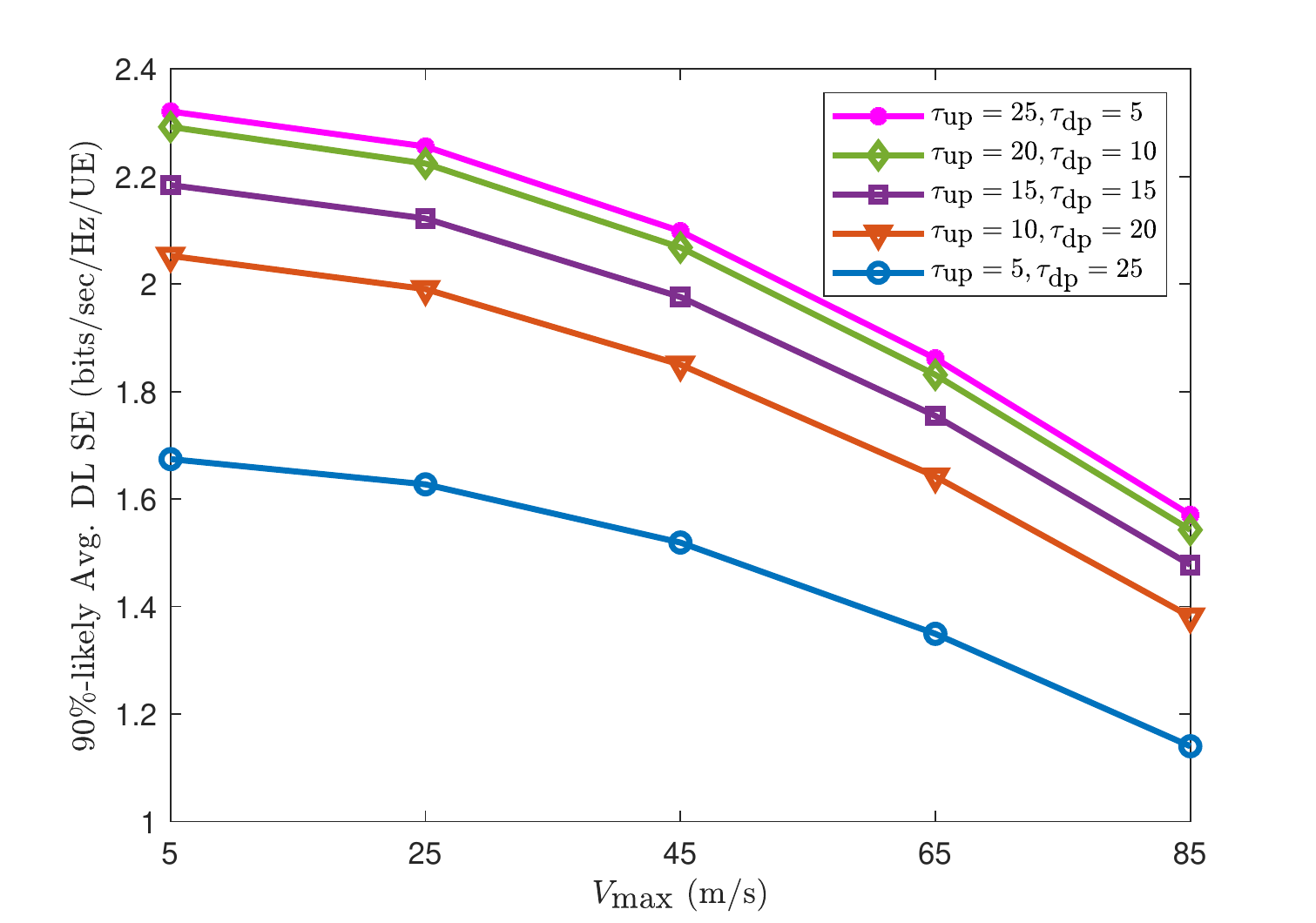}
    \caption{Plot of the $90\%$-likely average downlink SE vs. $V_{\textrm{max}}$ for different $\{\Tup,\Tdp\}$ pairs when $\Tdd = 500$ symbols. The relative gain in performance from having longer uplink pilots reduces as the pilot length approaches its upper limit.}
    \label{fig:mostProbableRateCF_3}
\end{figure}
 
Next, we focus on the effect of densifying a cell-free network with APs when the total number of AP antennas in the network is held fixed. Figure \ref{fig:channelHardeningCF} shows a plot of the cumulative distribution function (CDF) of the average downlink SE of a UE at $\Vmax = 5~\textrm{m/s}$ with and without downlink training. We consider two deployments: one involving 100 APs with 4 antennas each and another involving 400 APs with a single antenna each. As more APs are added in the network, the average distance between a UE and an AP reduces and it is expected that the downlink performance of a UE improve. From the figure, we observe that densification improves the $90\%$-likely average SE significantly provided the UEs receive downlink training. When UEs rely on statistical CSI, however, the $90\%$-likely average SE marginally drops as more APs are added in the network. This is because the performance of the statistical CSI scheme depends on how close the actual value of the downlink channel is to the mean value, which in turn is determined by the amount of channel hardening. Channel hardening in cell-free mMIMO is governed primarily by the APs that are located geographically close to a UE. When the number of antennas on an AP drops from 4 to 1, there is loss in the downlink SE owing to less channel hardening which counteracts any improvement arising from the reduced AP-UE distance.

\begin{figure}
    \centering
    \includegraphics[width=4.5in]{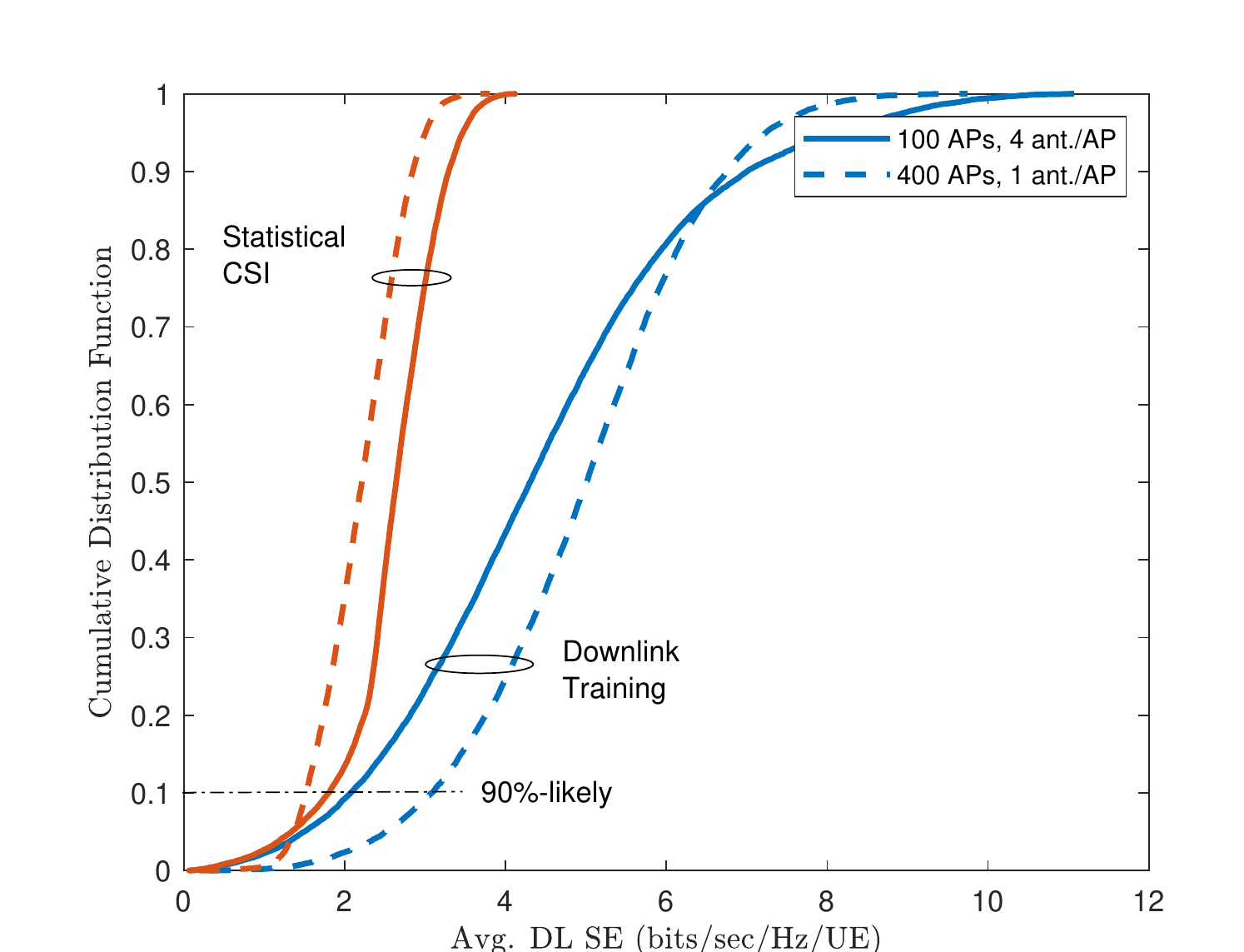}
    \caption{CDF plot of the average downlink SE at $\Tdd = 500$ symbols and $\Vmax = 5~\textrm{m/s}$. Densification induces significant gain in the average SE provided the UEs receive downlink pilots.}
    \label{fig:channelHardeningCF}
\end{figure}

Figure \ref{fig:sumRateCF_4} shows a plot of the average sum-SE of the cell-free network drawn as a function of $\Vmax$ for the two scenarios presented in Fig.~\ref{fig:channelHardeningCF}. When UEs rely on statistical CSI, the sum-SE obtained from having fewer APs with more antennas each is significantly higher than a dense AP deployment, and this gain in performance is observed across all levels of mobility regardless of pilot contamination. With downlink training, having more APs with fewer antennas each yields better sum-SE. However, this gain becomes negligible (or even slightly negative depending on the severity of pilot contamination) at extreme user mobility. This is because, at high UE velocities, the downlink channel estimates get outdated rapidly, and as a consequence, the advantage due to the lower UE-AP distance reduces. 

\begin{figure}
    \centering
    \includegraphics[width=4.5in]{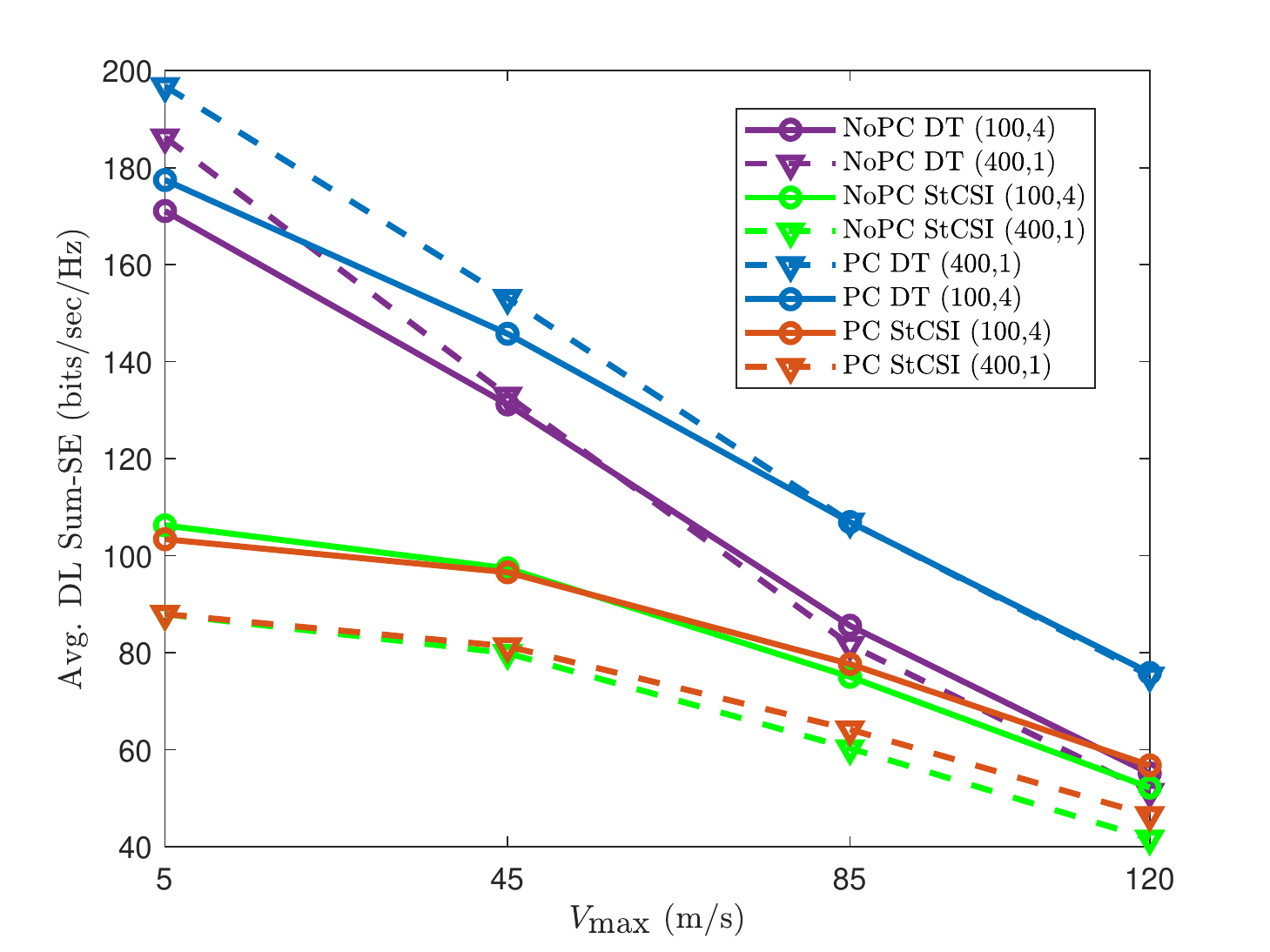}
    \caption{Plot of the network sum-SE vs. $\Vmax$ when $\Tdd = 500$ symbols. With downlink training, the gain in the sum-SE due to densification diminishes as the UEs move faster.}
    \label{fig:sumRateCF_4}
\end{figure}

\subsection{Cellular mMIMO}
Figure \ref{fig:cellular1} shows plots of the $90\%$-likely average downlink SE and the average sum-SE in cellular mMIMO drawn as a function of $\Vmax$. Although the effect of mobility remains largely the same as in cell-free mMIMO, in cellular mMIMO, both downlink training and statistical CSI schemes perform equally well (in terms of the $90\%$-likely measure) with the latter marginally outperforming the former. Owing to the high degree of channel hardening and the smaller training overhead, the statistical CSI scheme yields slightly better $90\%$-likely average downlink SE than the downlink training scheme when we account for the time-varying channel. However, the average sum-SE is found to be higher with the downlink training scheme. Thus, although it does not improve the $90\%$-likely SE by much, downlink training is helpful in improving the average sum-SE of cellular mMIMO. This is because the improved channel estimates obtained via downlink training allow the best-performing UEs (UEs with high SEs) to retain their SE for a longer duration, thus improving the sum-SE.

\begin{figure}
    \centering
    \subfigure[]
    {
        \includegraphics[width=3.0in]{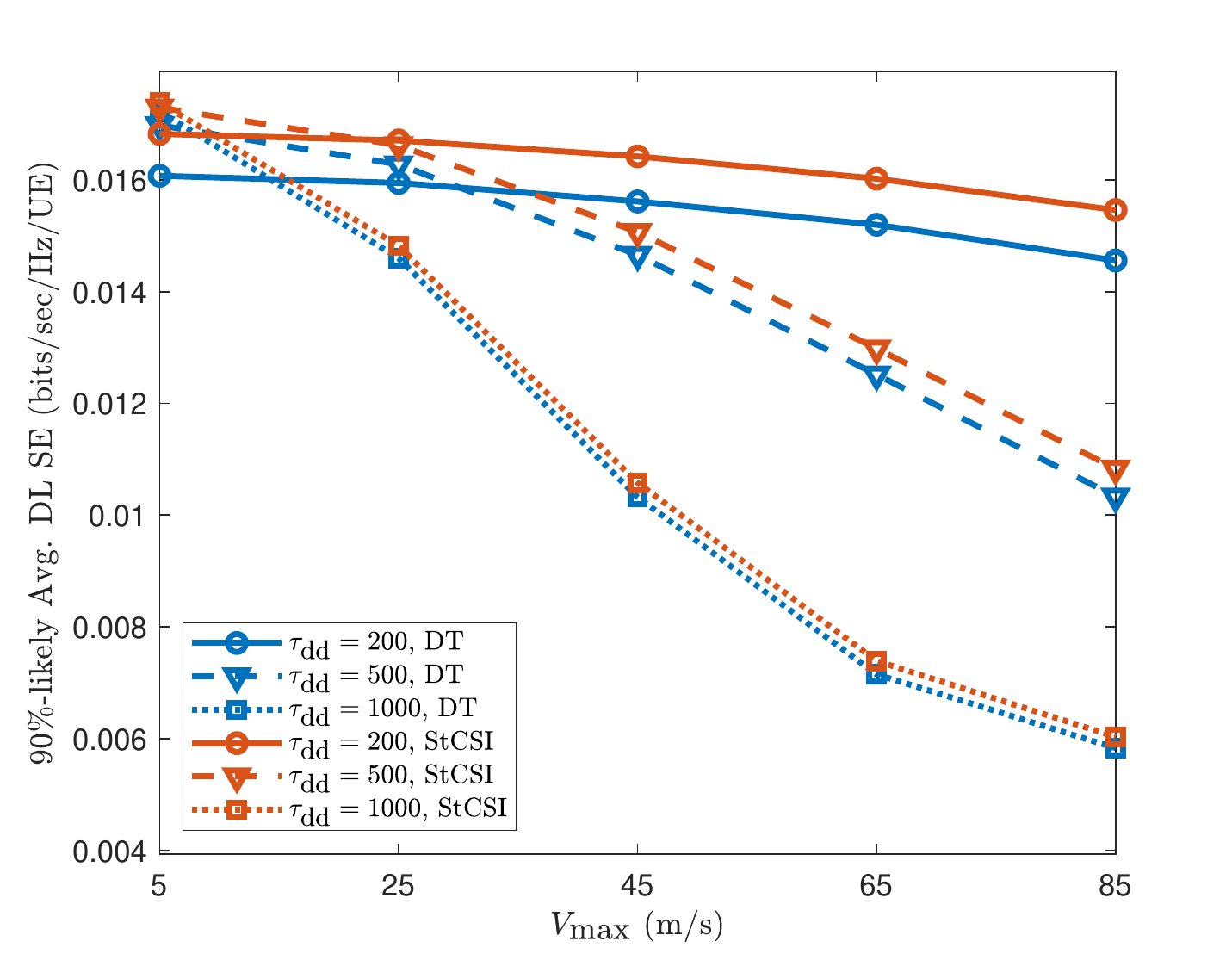}
        \label{fig:mostProbableRateCellular_1}
    }
    \subfigure[]
    {
        \includegraphics[width=3.0in]{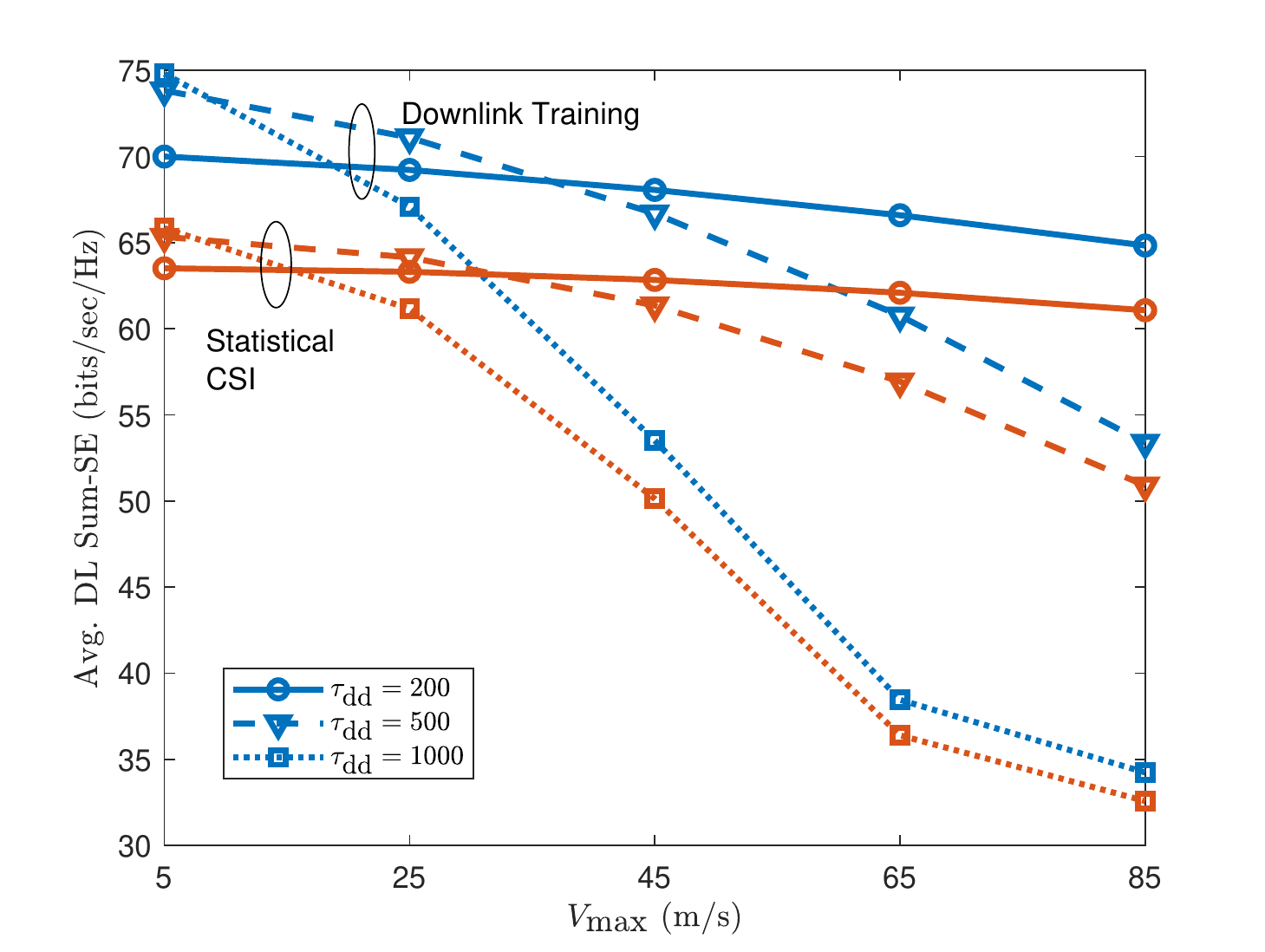}
        \label{fig:sumRateCellular_1.eps}
    }
    \caption{Plots of (a) $90\%$-likely average downlink SE and (b) average downlink sum-SE drawn as a function of $\Vmax$ in a cellular mMIMO network.}
    \label{fig:cellular1}
\end{figure}

Figure \ref{fig:cellular2} shows plots of the $90\%$-likely average downlink SE and the average sum-SE drawn as a function of the data duration $\Tdd$. As in cell-free mMIMO, the average downlink SE in cellular mMIMO is found to possess a non-monotonic behaviour with respect to $\Tdd$. However, comparing figures \ref{fig:mostProbableRateCellular_2} and \ref{fig:mostProbableRateCF_2}, we find that the relative behavior between the downlink training and the statistical CSI scheme is different in both networks. Unlike cell-free mMIMO, in cellular mMIMO, the disparity between the two schemes remains consistent across the entire $\Tdd$ values. Further, the performance gap between the two schemes is maximum initially and tends to reduce as data duration increases. With cell-free mMIMO, however, the two curves tend to start off together (after the cross-over) at low $\Tdd$ values and diverge for longer data sequences. From a sum-SE perspective, however, it is found that for smaller data duration, the statistical CSI scheme yields better sum-SE performance in cellular mMIMO, but as the data duration increases, downlink training scheme tends to outperform. The key takeaway is that in cellular mMIMO, depending on what is to be optimized (user-fairness or sum-SE), either of downlink training and statistical CSI may be  preferred.

\begin{figure}
    \centering
    \subfigure[]
    {
        \includegraphics[width=3.0in]{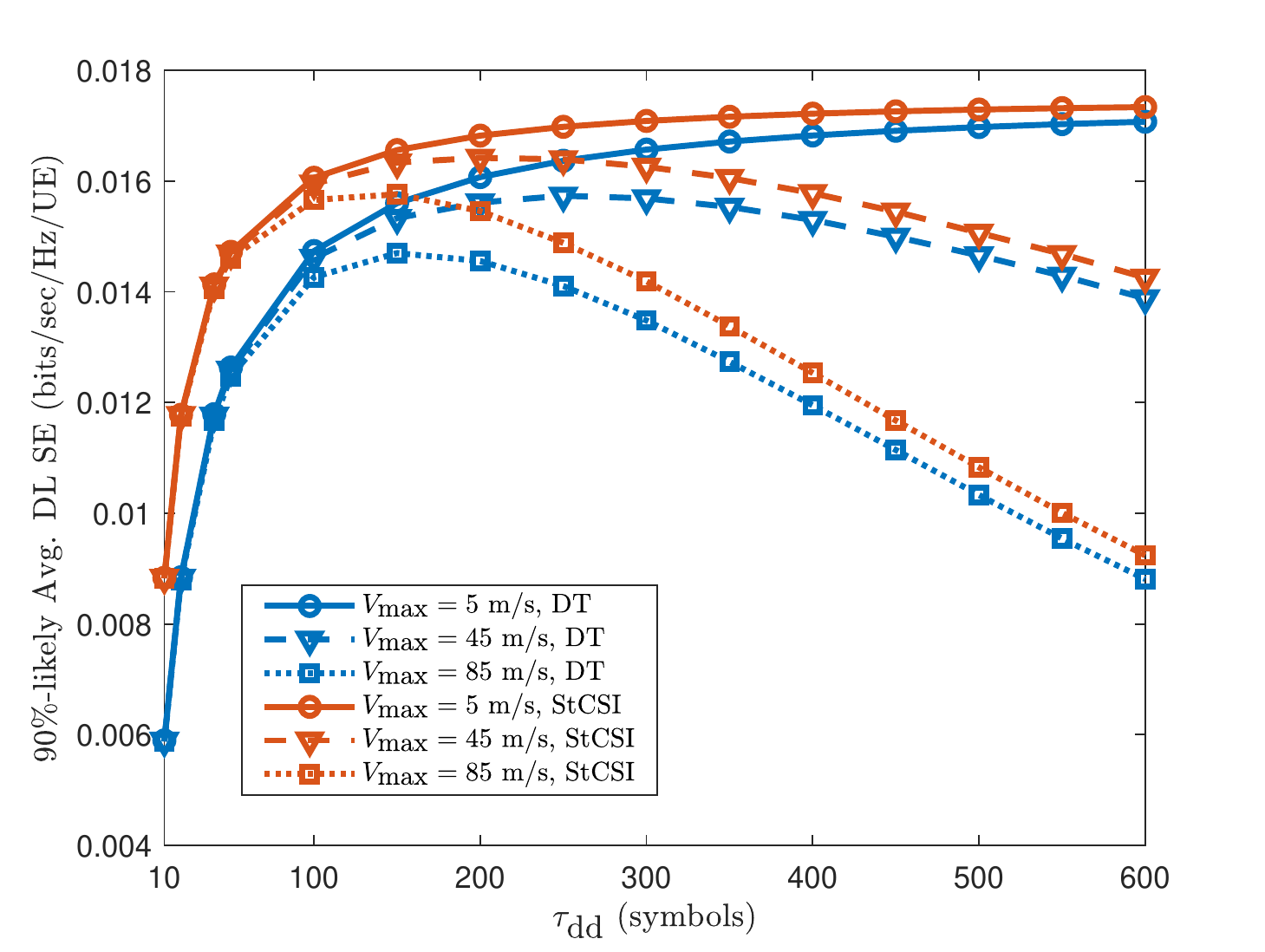}
        \label{fig:mostProbableRateCellular_2}
    }
    \subfigure[]
    {
        \includegraphics[width=3.0in]{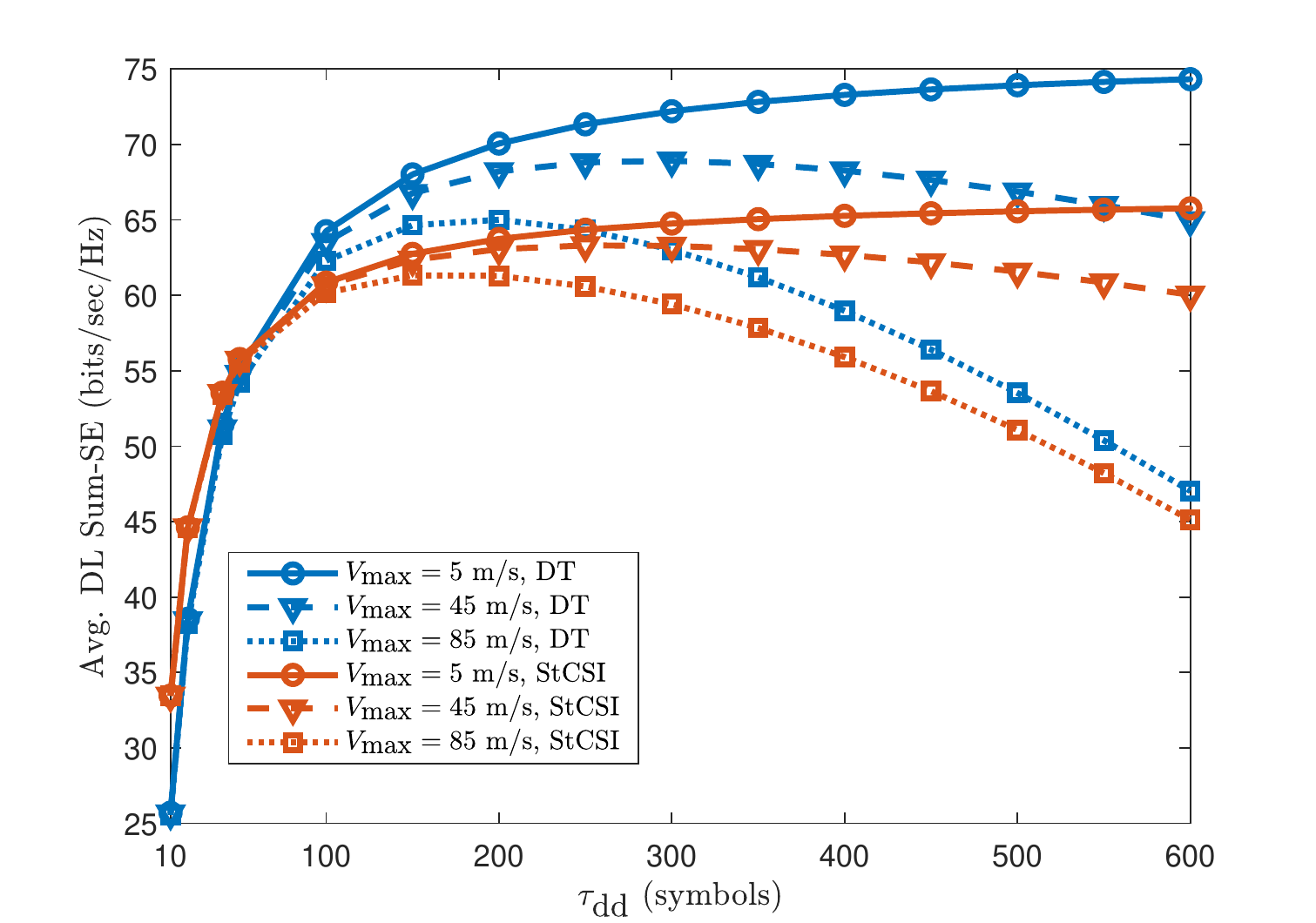}
        \label{fig:sumRateCellular_2.eps}
    }
    \caption{Plots of (a) $90\%$-likely average downlink SE and (b) average downlink sum-SE drawn as a function of $\Tdd$ in a cellular mMIMO network.}
    \label{fig:cellular2}
\end{figure}

\section{Conclusions}\label{sec:conclusion}

In this paper, focusing on the downlink, we analyzed the performance of a cell-free mMIMO network while accounting for user mobility. We showed that mobility results in multi-user interference in the form of channel aging and pilot contamination which degrades the performance of the UEs present in the network. Using numerical results, the effects of the maximum user velocity, the data duration and the uplink/downlink training lengths on the per-user SE and the sum-SE were illustrated. While downlink training is beneficial to cell-free mMIMO UEs, the gain in performance depends on factors such as user mobility and the data duration in the transmit frame. Furthermore, when the total number of AP antennas in the network is fixed, it is far better to have more APs with fewer antennas each than otherwise. When considering the sum-SE, however, this gain in performance reduces at high user mobility. Finally, in a cellular mMIMO network, UEs generally achieve slightly better 90\% likely SE when they rely on statistical CSI to decode downlink data. However, if the sum-SE is to be optimized, downlink training must be preferred. This difference arises from the different degrees of channel hardening in cell-free and cellular mMIMO.  Future work can consider the impact of spatial correlation in the channels.

\appendices

\section{Proof of Proposition 1}
\label{app:proof_proposition_1}
Recall from \eqref{eq:DLDataReceive2} that the $n$th received data signal at the $k$th UE is given by
\begin{align}
    r_{\text{d}, k}[n] =  \sqrt{\Ed}d_{kk}[n] q_{k}[n] + \Tilde{w}_{\text{d},k}[n],
\end{align}
where $\Tilde{w}_{\text{d},k}[n] \triangleq \sqrt{\Ed} \sum_{k' \neq k}^{K} d_{kk'}[n] q_{k'}[n] + w_{\text{d},k}[n]$ represents the effective non-Gaussian noise. Assuming $\{q_{k'}[n]\}$ has zero mean and is independent of $d_{kk'}[n]~\forall~k,k'$, we can write 
\begin{align}
    \expect\{\Tilde{w}_{\text{d},k}[n]|\hat{d}_{kk}\} = \expect\{q_{k}^{*}[n]\Tilde{w}_{\text{d},k}[n]|\hat{d}_{kk}\} = \expect\{d_{kk}^{*}[n]q_{k}^{*}[n]\Tilde{w}_{\text{d},k}[n]|\hat{d}_{kk}\} = 0.
\end{align}
Now, the achievable downlink SE of the $k$th UE can be computed using the capacity bounding technique in \cite{marzetta_book} for a fading channel with non-Gaussian noise where the receiver has access to side information as
\begin{align}\label{eq:capacity0}
    \SE_{k}[n] \geq \expect\left(\text{log}_{2}\left(1 + \frac{\Ed \Big|\expect\Big\{d_{kk}[n]\Big|\hat{d}_{kk}\Big\}\Big|^{2}}{\Ed\sum_{k'=1}^{K}\expect\Big\{|d_{kk'}[n]|^{2}\Big|\hat{d}_{kk}\Big\} - \Ed \Big|\expect\Big\{d_{kk}[n]\Big|\hat{d}_{kk}\Big\}\Big|^{2}+1}\right)\right).
\end{align}
From~\eqref{eq:d_kk'[n]}, $d_{kk}[n]$ can be expressed as
\begin{align}
    d_{kk}[n] = \rho_{k}[n]\hat{d}_{kk} + \rho_{k}[n]\Tilde{d}_{kk} + \bar{\rho}_{k}[n]z_{kk}[n],
\end{align}
where the quantities $\hat{d}_{kk}, \Tilde{d}_{kk}$ and $z_{kk}[n]$ are mutually uncorrelated with the latter two terms having zero mean. In section II, the quantity $d_{kk}[n]$ was shown to be approximately Gaussian for finite values of $M$ and $L$. Furthermore, the quantities $\hat{d}_{kk}, \Tilde{d}_{kk}$ and $z_{kk}[n]$ are (approximately) jointly Gaussian and hence are statistically independent too. This allows us to write
\begin{align}\label{eq:SINR_Numerator_Final}
    \expect\Big\{d_{kk}[n]\Big|\hat{d}_{kk}\Big\} = \rho_{k}[n]\hat{d}_{kk}.
\end{align}
Now, focusing on the denominator term of the SINR in \eqref{eq:capacity0}, we can write 
\begin{align}
    \Ed\sum_{k'=1}^{K}\expect\Big\{|d_{kk'}[n]&|^{2}\Big|\hat{d}_{kk}\Big\} - \Ed \Big|\expect\Big\{d_{kk}[n]\Big|\hat{d}_{kk}\Big\}\Big|^{2}+1 \nonumber \\
    = \Ed\sum_{k'\neq k}^{K}&\expect\Big\{|d_{kk'}[n]|^{2}\Big|\hat{d}_{kk}\Big\} + \Ed\expect\Big\{|d_{kk}[n]|^{2}\Big|\hat{d}_{kk}\Big\} - \Ed \rho_{k}^{2}[n]|\hat{d}_{kk}|^{2}  + 1, \label{eq:SINR_Denominator}
\end{align}
where
\begin{align}
    \expect\{|d_{kk}[n]|^{2}\} =&  \rho_{k}^{2}[n]\expect\Big\{|\hat{d}_{kk}|^{2}\Big\} + \rho_{k}^{2}[n]\expect\Big\{|\Tilde{d}_{kk}|^{2}\Big\} + \bar{\rho}_{k}^{2}[n]\expect\Big\{|z_{kk}[n]|^{2}\Big\}.
\end{align}
Therefore, 
\begin{align}
    \expect\Big\{|d_{kk}[n]|^{2}\Big|\hat{d}_{kk}\Big\} =&  \rho_{k}^{2}[n]|\hat{d}_{kk}|^{2} + \rho_{k}^{2}[n]\expect\Big\{|\Tilde{d}_{kk}|^{2}\Big\} + \bar{\rho}_{k}^{2}[n]\expect\Big\{|z_{kk}[n]|^{2}\Big\}.
\end{align}
Substituting the above expression in \eqref{eq:SINR_Denominator}, we obtain
\begin{align}
    \Ed\sum_{k'=1}^{K}\expect\Big\{|d_{kk'}[n]|^{2}\Big|\hat{d}_{kk}\Big\}& - \Ed \big|\expect\Big\{d_{kk}[n]\Big|\hat{d}_{kk}\Big\}\big|^{2} + 1 \nonumber \\
    = \Ed\sum_{k'\neq k}^{K}&\expect\Big\{|d_{kk'}[n]|^{2}\Big|\hat{d}_{kk}\Big\} + \Ed\rho_{k}^{2}[n]\expect\Big\{|\Tilde{d}_{kk}|^{2}\Big\} + \Ed\bar{\rho}_{k}^{2}[n]\expect\Big\{|z_{kk}[n]|^{2}\Big\} + 1. \label{eq:SINR_Denominator_Final}
\end{align}
Substituting \eqref{eq:SINR_Numerator_Final} and \eqref{eq:SINR_Denominator_Final} back in \eqref{eq:capacity0}, we obtain the following expression for the lower bound on the downlink SE:
\begin{equation}\label{eq:capacity1}
    \SE_{k}^{\DT}[n] = \expect\Bigg\{\text{log}_{2}\Bigg(1 + \frac{ \rho_{k}^{2}[n]|\hat{d}_{kk}|^{2}}{\sum_{k'\neq k}^{K}\expect\Big\{|d_{kk'}[n]|^{2}\Big|\hat{d}_{kk}\Big\} + \rho_{k}^{2}[n]\expect\Big\{|\Tilde{d}_{kk}|^{2}\Big\} + \bar{\rho}_{k}^{2}[n]\expect\Big\{|z_{kk}[n]|^{2}\Big\} +  \frac{1}{\Ed}}\Bigg)\Bigg\}.
\end{equation}
To further approximate the lower bound, the outermost expectation in the above expression can be taken inside the logarithm \cite[Lemma 1]{zhang_jsp_2014}. This gives us the following approximation of the achievable downlink SE:
\begin{equation}\label{eq:capacity2}
    \SE_{k}^{\DT}[n] \approx \text{log}_{2}\left(1 + \frac{ \rho_{k}^{2}[n]\expect\Big\{|\hat{d}_{kk}|^{2}\Big\}}{\sum_{k'\neq k}^{K}\expect\Big\{|d_{kk'}[n]|^{2}\Big\} + \rho_{k}^{2}[n]\expect\Big\{|\Tilde{d}_{kk}|^{2}\Big\} + \bar{\rho}_{k}^{2}[n]\expect\Big\{|z_{kk}[n]|^{2}\Big\}  +  \frac{1}{\Ed}}\right).
\end{equation}
Focusing on the numerator term of the effective SINR in the above expression, it can be shown that 
\begin{align}
    \expect\{|\hat{d}_{kk}|^{2}\} =& |\expect\{d_{kk}\}|^{2} + \frac{|\covariance\{d_{kk}, \breve{y}_{\text{dp}, k}\}|^{2}}{|\variance\{\breve{y}_{\text{dp}, k}\}|^{2}}\variance\{\breve{y}_{\text{dp}, k}\} \nonumber \\
    =& \left(\sum_{m=1}^{M}L\sqrt{\eta_{mk}}\gamma_{mk}\right)^{2} + \frac{\left(\sqrt{\Tdp\Edp}\sum_{m=1}^{M}L\eta_{mk}\gamma_{mk}\beta_{mk}\right)^{2}}{1 + \Tdp\Edp\sum_{m=1}^{M}\sum_{k'=1}^{K}\eta_{mk'}\gamma_{mk'}L\beta_{mk}|\bpsi_{k'}^{\hermitian}\bpsi_{k}|^{2}}. \label{eq:TP1_SINR_NUM}
\end{align}
For the denominator of the SINR in \eqref{eq:capacity2}, we compute $\expect\{|d_{kk'}[n]|^{2}\}$ by writing
\begin{align}
    \expect\{|d_{kk'}[n]|^{2}\} =&  \rho_{k}^{2}[n]\sum_{m=1}^{M}\sum_{n=1}^{M}\sqrt{\eta_{mk'}\eta_{nk'}}\expect\{\g_{mk}^{\transpose}\hat{\g}_{mk'}^{*}\g_{nk}^{\hermitian}\hat{\g}_{nk'}\} \nonumber \\
    & + \bar{\rho}_{k}^{2}[n]\sum_{m=1}^{M}\sum_{n=1}^{M}\sqrt{\eta_{mk'}\eta_{nk'}}\expect\{\mathbf{z}_{mk}^{\transpose}\hat{\g}_{mk'}^{*}\mathbf{z}_{nk}^{\hermitian}\hat{\g}_{nk'}\}.
\end{align}
Simplifying the above expression, we obtain
\begin{align}
    \expect\{|d_{kk'}[n]|^{2}\} =&  \rho_{k}^{2}[n]\left(\sum_{m=1}^{M}L\sqrt{\eta_{mk'}} \gamma_{mk'} \frac{\beta_{mk}}{\beta_{mk'}}\right)^{2}|\bphi_{k'}^{\hermitian} \bphi_{k}|^{2}  + \sum_{m=1}^{M}L\eta_{mk'}\gamma_{mk'}\beta_{mk}. \label{eq:TP1_SINR_DEN1}
\end{align}
To compute $\expect\{|\Tilde{d}_{kk}|^{2}\}$ in \eqref{eq:capacity2}, we must evaluate $\expect\{|d_{kk}|^{2}\}$ first. Substituting $k' = k$ and $n = 0$ in \eqref{eq:TP1_SINR_DEN1}, we obtain
\begin{align}
    \expect\{|d_{kk}|^{2}\} =& \sum_{m=1}^{M}L \eta_{mk}\gamma_{mk}\beta_{mk} + \left(\sum_{m=1}^{M}L\sqrt{\eta_{mk}}\gamma_{mk}\right)^{2}.
\end{align}
Now, the quantity $\expect\{|\Tilde{d}_{kk}|^{2}\}$ can be computed as
\begin{align}
    \expect\{|\Tilde{d}_{kk}|^{2}\} =& \expect\{|d_{kk}|^{2}\} - \expect\{|\hat{d}_{kk}|^{2}\} \nonumber \\
    =& \sum_{m=1}^{M}L \eta_{mk}\gamma_{mk}\beta_{mk} - \frac{\Tdp\Edp L^{2}\left(\sum_{m=1}^{M}\eta_{mk}\gamma_{mk}\beta_{mk}\right)^{2}}{1 +  L  \Tdp\Edp \sum_{m=1}^{M} \sum_{k'=1}^{K} \eta_{mk'} \gamma_{mk'} \beta_{mk} |\bpsi_{k'}^{\hermitian}\bpsi_{k}|^{2}} \label{eq:TP1_SINR_DEN2}
\end{align}
Finally, the quantity $\expect\{|z_{kk}[n]|^{2}\}$ in \eqref{eq:capacity2} is found as
\begin{align}
    \expect\{|z_{kk}[n]|^{2}\} =& \sum_{m=1}^{M}\sum_{n=1}^{M} \sqrt{\eta_{mk}\eta_{nk}}\expect\{\mathbf{z}_{mk}^{\transpose}[n]\hat{\g}_{mk}^{*}\mathbf{z}_{nk}^{\hermitian}[n]\hat{\g}_{nk}\} = \sum_{m=1}^{M}L\eta_{mk}\gamma_{mk}\beta_{mk}. \label{eq:TP1_SINR_DEN3}
\end{align}
Substituting \eqref{eq:TP1_SINR_NUM}, \eqref{eq:TP1_SINR_DEN1}, \eqref{eq:TP1_SINR_DEN2} and \eqref{eq:TP1_SINR_DEN3} in \eqref{eq:capacity2}, we obtain the result in proposition 1.

\section{Proof of Proposition 2}
\label{app:proof_proposition_2}

Let $\bphi_{lk} \in \complex^{\Tup} $ denote the uplink pilot assigned to $\UE_{lk}$. The uplink training signal recieved at BS $j$ is
\begin{equation}
    \mathbf{Y}_{\text{up},j} = \sqrt{\Tup\Eup}\sum_{l'=1}^{L_c}\sum_{i'=1}^{K_c}\g_{l'i'}^{j}\bphi_{l'i'}^{\hermitian} + \mathbf{W}_{\text{up},j},
\end{equation}
where $\Eup$ denotes the normalized transmit SNR for the uplink and $\mathbf{W}_{\text{up},m}$ consisting of i.i.d $\CN\left(0,1\right)$ entries denotes noise at the $j$th BS. Now, BS $j$ correlates the received signal with pilot $\bphi_{li}$ to obtain
\begin{align}
    \mathbf{y}_{\text{up},jli} = \mathbf{Y}_{\text{up},j}\bphi_{li} = \sqrt{\Tup\Eup}\sum_{l'=1}^{L_c}\sum_{i'=1}^{K_c}\g_{l'i'}^{j}\bphi_{l'i'}^{\hermitian}\bphi_{li} + \mathbf{W}_{\text{up},j}\bphi_{li}.
\end{align}
Given $\mathbf{y}_{\text{up},jli}$, the MMSE channel estimate $\hat{\mathbf{h}}_{li}^{j}$ is obtained as 
\begin{align}
    \hat{\mathbf{g}}_{li}^{j} = \frac{\sqrt{\Tup\Eup}\beta_{li}^{j}}{\Tup\Eup\sum_{l'=1}^{L_c}\beta_{l'i}^{j} + 1}\mathbf{y}_{\text{up},jli} = c_{li}^{j}\mathbf{y}_{\text{up},jli}.
\end{align}
In the above expression, we have $\hat{\mathbf{g}}_{li}^{j} \sim \CN\left(0, \gamma_{li}^{j}\mathbf{I}_{M_c}\right)$ where $\gamma_{li}^{j} = \sqrt{\Tup\Eup}c_{li}^{j}\beta_{li}^{j}$.

Let $\bpsi_{lk} \in \complex^{\Tdp}$ denote the downlink pilot sequence intended for $\UE_{lk}$. The BS uses conjugate beamforming to transmit downlink pilots. The downlink pilot signal received at $\UE_{lk}$ is
\begin{align}
    \mathbf{y}_{\text{dp},lk} = \sqrt{\Tdp\Edp}\sum_{l'=1}^{L}\sum_{i=1}^{K_c}\sqrt{\eta_{l'i}}\g_{lk}^{l'\transpose}\hat{\mathbf{g}}_{l'i}^{l'*}\bpsi_{l'i}^{\hermitian} + \mathbf{w}_{\text{dp},lk}.
\end{align}
where $\Edp$ denotes the normalized transmit SNR for the downlink, $\eta_{l'i}$ denotes the power control coefficient intended for $\UE_{l'i}$, and $\mathbf{w}_{\textrm{dp},lk} \in \complex^{1 \times \Tdp}$ denotes additive Gaussian noise at $\UE_{lk}$. Next, $\UE_{lk}$ correlates the above received signal with pilot $\bpsi_{lk}$ to obtain
\begin{align}
    \breve{y}_{\text{dp},lk} = \sqrt{\Tdp\Edp} \sqrt{\eta_{lk}} \g_{lk}^{l\transpose}\hat{\mathbf{g}}_{lk}^{l*} + \sqrt{\Tdp\Edp}\sum_{\substack{l'=1 \\ l' \neq l}}^{L_c}\sqrt{\eta_{l'k}}\g_{lk}^{l'\transpose}\hat{\mathbf{g}}_{l'k}^{l'*}\bpsi_{l'k}^{\hermitian}\bpsi_{lk} + \Tilde{w}_{\text{dp},lk},
\end{align}
where $\Tilde{w}_{\text{dp},lk} = \mathbf{w}_{\text{dp},lk}\bpsi_{lk}$. The above expression can be rewritten as
\begin{align}
    \breve{y}_{\text{dp},lk} =& \sqrt{\Tdp\Edp} d_{lkk}^{l} + \sqrt{\Tdp\Edp}\sum_{\substack{l'=1 \\ l' \neq l}}^{L_c}d_{lkk}^{l'} + \Tilde{w}_{\text{dp},lk},
\end{align}
where $d_{lkk}^{l'} = \sqrt{\eta_{l'k}}\g_{lk}^{l'\transpose}\hat{\mathbf{g}}_{l'k}^{l'*}$ represents the downlink channel. The MMSE estimate of $d_{lkk}^{l}$ can be found as \cite{kay_book_1993}
\begin{equation}
    \hat{d}_{lkk}^{l} = M_{c}\sqrt{\eta_{lk}}\gamma_{lk}^{l} + \frac{\sqrt{\Tdp\Edp}M_{c}\eta_{lk}\gamma_{lk}^{l}\beta_{lk}^{l}}{ 1 + \Tdp\Edp\sum_{i=1}^{L_c}M_{c}\eta_{ik}\gamma_{ik}^{i}\beta_{lk}^{i}}\left(\breve{y}_{\textrm{dp}, lk} - \sqrt{\Tdp\Edp}\sum_{i=1}^{L_c}M_{c}\sqrt{\eta_{ik}}\gamma_{ik}^{i}\frac{\beta_{lk}^{i}}{\beta_{ik}^{i}}\right).
\end{equation}
The downlink channel is then given by $d_{lkk}^{l} = \hat{d}_{lkk}^{l} + \Tilde{d}_{lkk}^{l}$ where $\Tilde{d}_{lkk}^{l}$ is the channel estimation error. Next, the $j$th BS proceeds to transmit data symbols on the downlink. The downlink data vector transmitted by BS $j$ at the $n$th-instant ($n = \Tup+\Tdp, \dots, \Tup+\Tdp+\Tdd-1$) is
\begin{align}
    \mathbf{x}_{j}[n] = \sqrt{\Ed}\sum_{i=1}^{K_c}\sqrt{\eta_{ji}}\hat{\mathbf{g}}_{ji}^{j*}q_{ji}[n].
\end{align}
Here, $q_{ji}[n]$ denotes the $n$th data symbol intended for $\UE_{ji}$; the symbols $\{q_{ji}[n]\}$ have zero-mean and unit variance and they are mutually uncorrelated. Now, $\UE_{lk}$ receives the $n$th transmit vector in the form
\begin{align}\label{eq:receive_data_cellular}
    r_{\text{d},lk}[n] = \sqrt{\Ed}d_{lkk}^{l}[n]q_{lk}[n] + \sqrt{\Ed}\sum_{\substack{i=1 \\ i\neq k}}^{K_c}d_{lki}^{l}[n]q_{li}[n] + \sqrt{\Ed}\sum_{\substack{l'=1 \\ l' \neq l}}^{L_c}\sum_{i=1}^{K_c}d_{lki}^{l'}[n]q_{l'i}[n] + w_{\text{d},lk}[n],
\end{align}
where $d_{lki}^{l'}[n] = \sqrt{\eta_{l'i}}\g_{lk}^{l'\transpose}[n]\hat{\mathbf{g}}_{l'i}^{l'*}$ is the $n$th-instant downlink channel and $w_{\text{d},lk}[n]$ is zero-mean, unit-variance noise at $\UE_{lk}$. Using the same approach as in appendix \ref{app:proof_proposition_1}, the approximate lower bound on the $n$th-instant downlink SE of $\UE_{lk}$ can be shown to be
\begin{align}
    \textrm{SE}^{\textrm{cell, DT}}_{lk}[n] = \log_{2}\left(1 + \textrm{SINR}_{lk}^{\textrm{cell,DT}}[n]\right),
\end{align}
where
\begin{align}
    \textrm{SINR}_{lk}^{\textrm{cell,DT}}[n] =& \frac{\left(\textrm{SINR}_{lk}^{\textrm{cell,DT}}[n]\right)_{\textrm{num}}}{\left(\textrm{SINR}_{lk}^{\textrm{cell,DT}}[n]\right)_{\textrm{den}}}
\end{align}
denotes the effective downlink SINR at the $n$th-instant with
\begin{align}
    \left(\textrm{SINR}_{lk}^{\textrm{cell,DT}}[n]\right)_{\textrm{num}} =&  \Ed\rho_{lk}^{2}[n]\,\expect\Big\{|\hat{d}_{lkk}^{l}|^{2}\Big\} \\
    \left(\textrm{SINR}_{lk}^{\textrm{cell,DT}}[n]\right)_{\textrm{den}} =& \Ed\sum_{l'\neq l}^{L_c}\sum_{i=1}^{K_c}\expect\Big\{|d_{lki}^{l'}[n]|^{2}\Big\} + \Ed\sum_{i \neq k}^{K_c}\expect\Big\{|d_{lki}^{l}[n]|^{2}\Big\} +  \Ed\rho_{lk}^{2}[n]\,\expect\Big\{|\Tilde{d}_{lkk}^{l}[n]|^{2}\Big\} \nonumber \\
    & + \Ed\bar{\rho}_{lk}^{2}[n]\,\expect\Big\{|z_{llkk}[n]|^{2}\Big\} + 1,
\end{align}
in which $z_{llkk}[n] = \sqrt{\eta_{lk}}\mathbf{z}_{lk}^{l\transpose}[n]\hat{\mathbf{g}}_{lk}^{l*}$. Upon evaluating the expectations in the above two equations, we obtain the closed-form expression given in proposition 2.

\section{Proof of Proposition 3}
\label{app:proof_proposition_3}
When UEs rely on channel statistics to decode data symbols, the recieved signal in \eqref{eq:receive_data_cellular} can be rewritten as \cite{ngo_twc_2017, marzetta_book}
\begin{align}
    r_{\textrm{d},lk}[n] =& \sqrt{\Ed}\expect\Big\{d_{lkk}^{l}[n]\Big\}q_{lk}[n] + \sqrt{\Ed}\left(d_{lkk}^{l}[n] - \sqrt{\Ed}\expect\Big\{d_{lkk}^{l}[n]\Big\}\right)q_{lk}[n] \nonumber \\
    & + \sqrt{\Ed}\sum_{\substack{i=1 \\ i\neq k}}^{K_c}d_{lki}^{l}[n]q_{li}[n] + \sqrt{\Ed}\sum_{\substack{l'=1 \\ l' \neq l}}^{L_c}\sum_{i=1}^{K_c}d_{lki}^{l'}[n]q_{l'i}[n] + w_{\text{d},lk}[n],
\end{align}
where the first term (containing the mean value of the downlink channel) represents the desired signal term and the rest of the terms form the "effective noise" that is uncorrelated with the desired signal term. A closed-form expression for the lower bound on the $n$th-instant downlink SE can be obtained by taking the ratio of the mean-square value of the desired signal term and the mean-square value of the effective noise to form the effective SINR and then using the hardening bound from \cite{marzetta_book}. This leads us to the result in proposition 3.

\ifCLASSOPTIONcaptionsoff
  \newpage
\fi



\bibliographystyle{IEEEtran}
\bibliography{IEEEabrv.bib, biblio.bib}
\end{document}